\newcommand{\var}{var}
\newcommand{\tw}{tw}
\newcommand{\bw}{bw}
\newcommand{\search}{\textup{Search}}
\newcommand{\searchvx}{\textup{SearchVertex}}
\newcommand{\lit}{lit}
\newcommand{\sat}{sat}
\begin{document}

\title{Characterizing Tseitin-formulas with short regular resolution refutations\thanks{This work has been partly supported by the PING/ACK project of the French National Agency for Research (ANR-18-CE40-0011).}}

\author{
	Alexis de Colnet \and Stefan Mengel 
}

\institute{
CNRS, UMR 8188, Centre de Recherche en Informatique de Lens (CRIL), Lens, F-62300, France\\
Univ. Artois, UMR 8188, Lens, F-62300, France
}

\maketitle

\begin{abstract}
\,\,Tseitin-formulas are systems of parity constraints whose structure is described by a graph. These formulas have been studied extensively in proof complexity as hard instances in many proof systems. In this paper, we prove that a class of unsatisfiable Tseitin-formulas of bounded degree has regular resolution refutations of polynomial length if and only if the treewidth of all underlying graphs~$G$ for that class is in $O(\log|V(G)|)$. To do so, we show that any regular resolution refutation of an unsatisfiable Tseitin-formula with graph $G$ of bounded degree has length $2^{\Omega(\tw(G))}/|V(G)|$, thus essentially matching the known $2^{O(\tw(G))}\textup{poly}(|V(G)|)$ upper bound up. Our proof first connects the length of regular resolution refutations of unsatisfiable Tseitin-formulas to the size of representations of \emph{satisfiable} Tseitin-formulas in decomposable negation normal form (DNNF). Then we prove that for every graph~$G$ of bounded degree, every DNNF-representation of every satisfiable Tseitin-formula with graph $G$ must have size $2^{\Omega(\tw(G))}$ which yields our lower bound for regular resolution. 

\keywords{proof complexity, regular resolution, DNNF, treewidth}
\end{abstract}

\newcommand{\decisionNode}{
\raisebox{-.5\height}{\begin{tikzpicture}
\def\x{1.2};
\def\y{0.6};
\node[draw,ellipse,inner sep=1, font=\small] (d) at (0.5*\x,1*\y) {$x$};
\node[inner sep=1, font=\small] (c0) at (0.1*\x,0*\y) {$c_0$};
\node[inner sep=1, font=\small]  (c1) at (0.9*\x,0*\y) {$c_1$};

\draw[-stealth, densely dashed] (d) -- (c0);
\draw[-stealth] (d) -- (c1);
\end{tikzpicture}}
}

\newcommand{\decisionNodeProof}{
\raisebox{-.5\height}{\begin{tikzpicture}
\def\x{1.2};
\def\y{0.6};
\node[draw,ellipse,inner sep=1, font=\small] (d) at (0.5*\x,1*\y) {$x_e$};
\node[inner sep=1, font=\small] (c0) at (0*\x,0*\y) {$u_{k_0}$};
\node[inner sep=1, font=\small]  (c1) at (1*\x,0*\y) {$u_{k_1}$};

\draw[-stealth, densely dashed] (d) -- (c0);
\draw[-stealth] (d) -- (c1);
\end{tikzpicture}}
}

\newcommand{\decisionNodeSk}{
\raisebox{-.5\height}{
\begin{tikzpicture}
\def\x{1.2};
\def\y{0.6};
\node[draw,ellipse,inner sep=1, font=\small] (d) at (0.5*\x,1*\y) {$x_e$};
\node[inner sep=1, font=\small] (c0) at (0*\x,0*\y) {$s^0_v$};
\node[inner sep=1, font=\small]  (c1) at (1*\x,0*\y) {$s^1_v$};

\draw[-stealth, densely dashed] (d) -- (c0);
\draw[-stealth] (d) -- (c1);
\end{tikzpicture}
}
}

\newcommand{\decisionNodeAndSk}{
\raisebox{-.5\height}{
\begin{tikzpicture}
\def\x{1.2};
\def\y{0.6};
\node[draw,ellipse,inner sep=1, font=\small] (d) at (0.45*\x,1*\y) {$x_e$};
\node[inner sep=1, font=\small] (c0) at (0*\x,0*\y) {$\land$};
\node[inner sep=1, font=\small]  (c1) at (0.9*\x,0*\y) {0};

\node[inner sep=1, font=\small] (l) at (-0.25*\x,-.8*\y) {$s^i_v$};
\node[inner sep=1, font=\small] (r) at (0.25*\x,-0.8*\y) {$s^{1-i}_b$};

\draw[-stealth, densely dashed] (d) -- (c0);
\draw[-stealth] (d) -- (c1);

\draw (c0) -- (l);
\draw (c0) -- (r);
\end{tikzpicture}
}
}

\newcommand{\decisionNodeAndZk}{
\raisebox{-.5\height}{
\begin{tikzpicture}
\def\x{1.2};
\def\y{0.6};
\node[draw,ellipse,inner sep=1, font=\small] (d) at (0.45*\x,1*\y) {$x_e$};
\node[inner sep=1, font=\small] (c0) at (0*\x,0*\y) {0};
\node[inner sep=1, font=\small]  (c1) at (0.9*\x,0*\y) {$\land$};

\node[inner sep=1, font=\small] (l) at (0.68*\x,-.8*\y) {$s^i_v$};
\node[inner sep=1, font=\small] (r) at (1.15*\x,-0.8*\y) {$s^{1-i}_b$};

\draw[-stealth, densely dashed] (d) -- (c0);
\draw[-stealth] (d) -- (c1);

\draw (c1) -- (l);
\draw (c1) -- (r);
\end{tikzpicture}
}
}

\newcommand{\decisionNodeDNNF}{
\raisebox{-.5\height}{\begin{tikzpicture}
\def\x{1.5};
\def\y{0.75};
\node[inner sep=1, font=\scriptsize] (o) at (0.5*\x,1*\y) {$\lor$};
\node[inner sep=1, font=\scriptsize] (a0) at (0.2*\x,0.6*\y) {$\land$};
\node[inner sep=1, font=\scriptsize] (a1) at (0.8*\x,0.6*\y) {$\land$};
\node[inner sep=1, font=\small] (l0) at (0.05*\x,0.1*\y) {$\overline{x}$};
\node[inner sep=1, font=\small]  (l1) at (0.65*\x,0.05*\y) {$x$};
\node[inner sep=1, font=\small] (c0) at (0.35*\x,0.05*\y) {$c_0$};
\node[inner sep=1, font=\small]  (c1) at (0.95*\x,0.05*\y) {$c_1$};

\draw (o) -- (a0);
\draw (o) -- (a1);

\draw (a0) -- (l0);
\draw (a0) -- (c0);

\draw (a1) -- (l1);
\draw (a1) -- (c1);
\end{tikzpicture}}
}

\newcommand{\guessingNode}{
\raisebox{-.5\height}{\begin{tikzpicture}
\def\x{1.2};
\def\y{0.6};
\node[draw,ellipse,inner sep=1, font=\scriptsize] (d) at (0.5*\x,1*\y) {$\hspace{1pt}?\hspace{1pt}$};
\node[inner sep=1, font=\small] (cl) at (0*\x,0*\y) {$c_l$};
\node[inner sep=1, font=\small]  (cr) at (1*\x,0*\y) {$c_r$};

\draw[-stealth] (d) -- (cl);
\draw[-stealth] (d) -- (cr);
\end{tikzpicture}}
}

\newcommand{\guessingNodeDNNF}{
\raisebox{-.5\height}{\begin{tikzpicture}
\def\x{1};
\def\y{0.5};
\node[inner sep=1, font=\scriptsize] (o) at (0.5*\x,1*\y) {$\lor$};
\node[inner sep=1, font=\small] (cl) at (0*\x,0*\y) {$c_l$};
\node[inner sep=1, font=\small]  (cr) at (1*\x,0*\y) {$c_r$};

\draw (o) -- (cl);
\draw (o) -- (cr);
\end{tikzpicture}}
}

\newcommand{\graph}{
\begin{tikzpicture}
\def\x{0.8};
\def\y{0.4};
\def\s{1.3}

\node (u1)[circle,fill=black,inner sep=\s] at (0,0) {};
\node (v1)[circle,draw=black,inner sep=\s] at (\x*0.5,\y) {};
\node (w1)[circle,fill=black,inner sep=\s,label=above:\footnotesize{\,$a$}] at (\x,0) {}; 
\node (z1)[circle,draw=black,inner sep=\s] at (\x*0.5,-1*\y) {};

\node (u2)[circle,draw=black,,inner sep=\s,label=above:\footnotesize{$b$\,}] at (\x*1.7,0) {};
\node (v2)[circle,fill=black,inner sep=\s] at (\x*2.5,\y) {};
\node (w2)[circle,fill=black,inner sep=\s] at (\x*2.5,-1*\y) {}; 

\draw (u1) 	to [out=60, in=180+30] (v1) 
		   	to [out=-30, in=180-60] (w1) 
		   	to [out=-180+60, in=30] (z1) 
		   	to [out=180-30, in=-60] (u1) ;
\draw (u1) -- (w1) ;
\draw (v1) -- (z1) ;

\draw (u2) to [out=50, in=190] (v2) to [out=-65, in=65](w2)  to [out=170, in=-50] (u2);

\draw (w1) -- (u2);
\end{tikzpicture}
}

\newcommand{\graphMinusZeroBridge}{
\begin{tikzpicture}
\def\x{0.8};
\def\y{0.4};
\def\s{1.3}

\node (u1)[circle,fill=black,inner sep=\s] at (0,0) {};
\node (v1)[circle,draw=black,inner sep=\s] at (\x*0.5,\y) {};
\node (w1)[circle,fill=black,inner sep=\s,label=above:\footnotesize{\,$a$}] at (\x,0) {}; 
\node (z1)[circle,draw=black,inner sep=\s] at (\x*0.5,-1*\y) {};

\node (u2)[circle,draw=black,,inner sep=\s,label=above:\footnotesize{$b$\,}] at (\x*1.7,0) {};
\node (v2)[circle,fill=black,inner sep=\s] at (\x*2.5,\y) {};
\node (w2)[circle,fill=black,inner sep=\s] at (\x*2.5,-1*\y) {}; 

\draw (u1) 	to [out=60, in=180+30] (v1) 
		   	to [out=-30, in=180-60] (w1) 
		   	to [out=-180+60, in=30] (z1) 
		   	to [out=180-30, in=-60] (u1) ;
\draw (u1) -- (w1) ;
\draw (v1) -- (z1) ;

\draw (u2) to [out=50, in=190] (v2) to [out=-65, in=65](w2)  to [out=170, in=-50] (u2);
\end{tikzpicture}
}

\newcommand{\graphMinusOneBridge}{
\begin{tikzpicture}
\def\x{0.8};
\def\y{0.4};
\def\s{1.3}

\node (u1)[circle,fill=black,inner sep=\s] at (0,0) {};
\node (v1)[circle,draw=black,inner sep=\s] at (\x*0.5,\y) {};
\node (w1)[circle,draw=black,inner sep=\s,label=above:\footnotesize{\,$a$}] at (\x,0) {}; 
\node (z1)[circle,draw=black,inner sep=\s] at (\x*0.5,-1*\y) {};

\node (u2)[circle,fill=black,,inner sep=\s,label=above:\footnotesize{$b$\,}] at (\x*1.7,0) {};
\node (v2)[circle,fill=black,inner sep=\s] at (\x*2.5,\y) {};
\node (w2)[circle,fill=black,inner sep=\s] at (\x*2.5,-1*\y) {}; 

\draw (u1) 	to [out=60, in=180+30] (v1) 
		   	to [out=-30, in=180-60] (w1) 
		   	to [out=-180+60, in=30] (z1) 
		   	to [out=180-30, in=-60] (u1) ;
\draw (u1) -- (w1) ;
\draw (v1) -- (z1) ;

\draw (u2) to [out=50, in=190] (v2) to [out=-65, in=65](w2)  to [out=170, in=-50] (u2);

\end{tikzpicture}
}

\section{Introduction}

Resolution is one of the most studied propositional proof systems in proof complexity due to its naturality and it connections to practical SAT solving~\cite{Nordstrom15,BussN2021}. A refutation of a CNF-formula in this system (a resolution refutation) relies uniquely on clausal resolution: in a refutation, clauses are iteratively derived by resolutions on clauses from the formula or previously inferred clauses, until reaching the empty clause indicating unsatisfiability. In this paper, we consider regular resolution which is the restriction of resolution to proofs in which, intuitively, variables which have been resolved away from a clause cannot be reintroduced later on by additional resolution steps. This fragment of resolution is known to generally require exponentially longer refutations than general resolution~\cite{Goerdt93,AlekhnovichJPU07,Urquhart11,VinyalsEJN20} but is still interesting since it corresponds to DPLL-style algorithms~\cite{DavisLL62,DavisP60}. Consequently, there is quite some work on regular resolution, see e.g.~\cite{AtseriasBRLNR18,Urquhart87,BeckI13,BeameBI12} for a very small sample.

Tseitin-formulas are encodings of certain systems of linear equations whose structure is given by a graph~\cite{Tseitin}. They have been studied extensively in proof complexity essentially since the creation of the field because they are hard instances in many settings, see e.g.~\cite{Urquhart87, Ben-Sasson02, ItsyksonO13, ItsyksonRSS19,BeameBI12}. It is known that different properties of the underlying graph characterize different parameters of their resolution refutations~\cite{GalesiTT20, AlekhnovichR11, ItsyksonO13}. Extending this line of work, we here show that treewidth determines the length of regular resolution refutations of Tseitin-formulas: classes of Tseitin-formulas of bounded degree have polynomial length regular resolution refutations if and only if the treewidth of the underlying graphs is bounded logarithmically in their size.
The upper bound for this result was already known from~\cite{AlekhnovichR11} where it is shown that, for every graph~$G$, unsatisfiable Tseitin-formulas with the underlying graph $G$ have regular resolution refutations of length at most $2^{O(\tw(G))}|V(G)|^c$ where $c$ is a constant. We provide a matching lower bound:
\begin{theorem}\label{theorem:main_result}
Let $T(G,c)$ be an unsatisfiable Tseitin-formula where $G$ is a connected graph with maximum degree at most $\Delta$. The length of the smallest regular resolution refutation of $T(G,c)$ is at least $2^{\Omega(\tw(G)/\Delta)}|V(G)|^{-1}$.
\end{theorem}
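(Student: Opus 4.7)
The plan is to follow the two-step strategy announced in the abstract. First I would convert any regular resolution refutation of $T(G,c)$ of length $L$ into a decision-DNNF (or more generally a DNNF) of size $O(L\cdot |V(G)|)$ computing a \emph{satisfiable} Tseitin-formula $T(G,c')$, where $c'$ differs from $c$ in exactly one vertex (this is always satisfiable since flipping one charge in an unsatisfiable Tseitin-formula makes it satisfiable). The idea is that a regular resolution refutation naturally defines a read-once branching program solving the clause-search problem: internal nodes query a variable $x_e$, edges correspond to answers, and leaves are labelled by a clause of $T(G,c)$ falsified by the current partial assignment. Because the refutation is regular, on every root-to-leaf path each variable is queried at most once, and this is precisely what gives a decomposable structure when one reinterprets the branching program as a DNNF: the subprogram rooted at a node depends only on variables not yet queried, so the two children of a decision node can be combined via $\lor$/$\land$ without losing decomposability. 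The function thus represented accepts an assignment $\alpha$ iff $\alpha$ satisfies all clauses encountered on its path, which with a bit of care gives $T(G,c')$ for the modified charge $c'$.

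Second, I would prove the DNNF lower bound: every DNNF computing any satisfiable $T(G,c')$ has size $2^{\Omega(\tw(G)/\Delta)}$. The standard route is via balanced partitions of variables. Any DNNF of size $N$ admits, by a classical extraction argument, a balanced partition $(X_1,X_2)$ of the variables such that the function restricted to rectangles of this partition is covered by at most $N$ combinatorial rectangles. For Tseitin-formulas, the variables correspond to edges of $G$, and one can translate a vertex separator of $G$ into a cut in the edge set whose boundary is controlled: any balanced partition of the edges of $G$ induces, via the incidence structure, a vertex set $S$ of size $\Omega(\tw(G)/\Delta)$ such that the parities across $S$ can be set independently. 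Each assignment to the edges with one endpoint in $S$ yields a distinct "state" that the DNNF must distinguish, giving the claimed $2^{\Omega(\tw(G)/\Delta)}$ bound. The factor $1/\Delta$ enters because a vertex separator of size $s$ only controls at most $\Delta s$ boundary edges, and conversely any balanced edge partition contains a vertex separator of size $\Omega(\tw(G)/\Delta)$ by the standard relationship between edge-balanced separators and treewidth in bounded-degree graphs.

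Combining the two steps, a regular resolution refutation of $T(G,c)$ of length $L$ yields a DNNF for a satisfiable Tseitin-formula on the same graph of size $O(L\cdot |V(G)|)$, which must be at least $2^{\Omega(\tw(G)/\Delta)}$, hence $L = 2^{\Omega(\tw(G)/\Delta)}|V(G)|^{-1}$ as claimed.

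The main obstacle is the DNNF lower bound. Translating a vertex-based graph parameter (treewidth) into the variable-partition framework used for DNNF lower bounds requires care: one must find a vertex separator that is simultaneously balanced in the edge count (since variables correspond to edges), of size $\Omega(\tw(G)/\Delta)$, and yields enough linearly independent parity constraints across the cut to force $2^{\Omega(s)}$ distinguishable states in any rectangle cover. The first reduction, by contrast, is mostly bookkeeping once one accepts that regular resolution refutations are read-once decision DAGs for clause search, so most of the technical work will concentrate on the combinatorial DNNF lower bound and on verifying that it applies to \emph{every} satisfiable charge~$c'$ on $G$.
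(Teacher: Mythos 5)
Your high-level two-step plan coincides with the paper's, but both steps as you describe them have genuine gaps. For the first step, the conversion of a regular refutation into a DNNF for a satisfiable $T(G,c')$ is not the ``mostly bookkeeping'' reinterpretation you sketch. A read-once branching program for the clause-search problem sends \emph{every} assignment to a leaf labelled by a clause it falsifies, so ``accepts $\alpha$ iff $\alpha$ satisfies all clauses encountered on its path'' does not define $T(G,c')$, and replacing decision nodes by $\lor$-of-$\land$ gadgets does not by itself yield a circuit computing any satisfiable Tseitin-formula. The paper instead passes to the coarser relation $\searchvx(G,c)$, invokes the fact that minimal 1-BPs for it are \emph{well-structured} (each node $u_k$ computes $\searchvx(G_k,c_k)$ for a connected subgraph $G_k$ with $T(G_k,c_k)$ unsatisfiable), and then builds, for every node $u_k$ and every vertex $v\in V(G_k)$, a gate computing the satisfiable formula $T(G_k,c_k+1_v)$. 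The case where the queried edge is a bridge is the crux: there the gate is a genuine decomposable conjunction of gates coming from the two \emph{different} children of $u_k$, which is exactly where DNNF succinctness is exploited (and where the earlier 1-BP-based reduction of Itsykson et al.\ paid a quasipolynomial blow-up). This per-vertex family of gates is also the actual source of the $|V(G)|$ factor you correctly anticipate.

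The second step has a more serious problem. A DNNF of size $N$ does \emph{not} admit a single balanced partition whose rectangles cover the function with $N$ rectangles; the theorem of Bova et al.\ only yields a cover by $N$ balanced rectangles for \emph{possibly different} partitions, one per rectangle. Moreover, even granting a single partition, your key geometric claim fails: it is not true that every edge-balanced partition of $E(G)$ has $\Omega(\tw(G)/\Delta)$ vertices incident to both sides. Take two disjoint copies of a bounded-degree graph of treewidth $t$ joined by a single edge: the partition into the two copies (with the bridge on one side) is balanced in edge count but has a one-vertex boundary, while the treewidth is $t$. The paper circumvents both issues with the adversarial multi-partition rectangle game: the cover player must commit to a v-tree, which, since the variables are the edges of $G$, is a branch decomposition, and the adversary then selects a cut of that decomposition of order at least $\bw(G)\ge \tfrac{2}{3}\tw(G)$ --- a cut that need not be balanced. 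Finally, your count of ``distinct states across $S$'' glosses over the fact that the number of models of a Tseitin-formula depends on the number of connected components of the underlying graph: after splitting the boundary vertices the graph may disconnect, and then the per-rectangle model bound collapses. The paper must first reduce to a $3$-connected topological minor of the same treewidth and then show that a constant fraction of the boundary vertices can be split while keeping the graph connected, before concluding that each rectangle contains at most $2^{|E(G)|-|V(G)|-k+1}$ of the $2^{|E(G)|-|V(G)|+1}$ models. Without these ingredients the claimed $2^{\Omega(\tw(G)/\Delta)}$ bound does not follow.
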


There were already known lower bounds for the length of resolution refutations of Tseitin-formulas based on treewidth before. For \emph{general} resolution, a $2^{\Omega(\tw(G)^2)/|V(G)|}$ lower bound can be inferred width the classical width-length relation of~\cite{Ben-Sasson02} and width bounds of~\cite{GalesiTT20}. This gives a tight $2^{\Omega(\tw(G))}$ bound when the treewidth of $G$ is linear in its number of vertices. For smaller treewidth, better bounds of $2^{\Omega(\tw(G))/\log|V(G)|}$ that almost match the upper bound where shown in~\cite{ItsyksonRSS19} for \emph{regular} resolution refutations. Building on~\cite{ItsyksonRSS19}, we eliminate the division by $\log|V(G)|$ in the exponent and thus give a tight $2^{\Theta(\tw(G))}$ dependence.

As in~\cite{ItsyksonRSS19}, our proof strategy follows two steps. First, we show that the problem of bounding the length of regular resolution refutations of an \emph{unsatisfiable} Tseitin-formula can be reduced to lower bounding the size of certain representations of a \emph{satisfiable} Tseitin-formula. Itsykson et al.~in \cite{ItsyksonRSS19} used a similar reduction of lower bounds for regular resolution refutations to bounds on read-once branching programs (1-BP) for satisfiable Tseitin-formulas, using the classical connection between regular resolution and the search problem which, given an unsatisfiable CNF-formula and a truth assignment, returns a clause of the formula it falsifies~\cite{LovaszNNW95}. Itsykson et al.~showed that there is a transformation of a 1-BP solving the search problem for an unsatisfiable Tseitin-formula into~a 1-BP of pseudopolynomial size computing a satisfiable Tseitin-formula with the same underlying graph. This yields lower bounds for regular resolution from lower bounds for 1-BP computing satisfiable Tseitin-formulas which~\cite{ItsyksonRSS19} also shows. Our crucial insight here is that when more succinct representations are used to present the satisfiable formula, the transformation from the unsatisfiable instance can be changed to have only a polynomial instead of pseudopolynomial size increase. Concretely, the representations we use are so-called decomposable negation normal forms (DNNF) which are very prominent in the field of knowledge compilation~\cite{Darwiche01} and generalize 1-BP. We show that every refutation of an unsatisfiable Tseitin-formula can be transformed into a DNNF-representation of a satisfiable Tseitin-formula with the same underlying graph with only polynomial overhead.

In a second step, we then show for every satisfiable Tseitin-formula with an underlying graph $G$ a lower bound of $2^{\Omega(\tw(G))}$ on the size of DNNF computing the formula. To this end, we adapt techniques developed in~\cite{BovaCMS16} to a parameterized setting. \cite{BovaCMS16} uses rectangle covers of a function, a common tool from communication complexity, to lower bound the size of any DNNF computing the function. Our refinement takes the form of a two-player game in which the first player tries to cover the models of a function with few rectangles while the second player hinders this construction by adversarially choosing the variable partitions respected by the rectangles from a certain set of partitions. We show that this game gives lower bounds for DNNF, and consequently the aim is to show that the adversarial player can always force $2^{\Omega(\tw(G))}$ rectangles in the game when playing on~a Tseitin-formula with graph $G$. This is done by proving that any rectangle for a carefully chosen variable partition \emph{splits} parity constraints of the formula in a way that bounds by~a function of $\tw(G)$ the number of models that can be covered. We show that, depending on the treewidth of $G$, the adversarial player can choose a partition to limit the number of models of every rectangle constructed in the game to the point that at least $2^{\Omega(\tw(G))}$ of them will be needed to cover all models of the Tseitin-formula. As a consequence, we get the desired lower bound of $2^{\Omega(\tw(G))}|V(G)|^{-1}$ for regular resolution refutations of Tseitin-formulas.

\section{Preliminaries}

\paragraph*{Notions on Graphs.}

We assume the reader is familiar with the fundamentals of graph theory. For a graph~$G$, we denote by $V(G)$ its vertices and by $E(G)$ its edges. For $v \in V(G)$, $E(v)$ denotes the edges incident to~$v$ and $N(v)$ its neighbors ($v$ is not in $N(v)$). For a subset $V'$ of $V(G)$ we denote by $G[V']$ the sub-graph of~$G$ induced by~$V'$.

A binary tree whose leaves are in bijection with the edges of~$G$ is called  a \emph{branch decomposition}\footnote{We remark that often branch decompositions are defined as unrooted trees. However, it is easy to see that our definition is equivalent, so we use it here since it is more convenient in our setting.}. Each edge $e$ of a branch decomposition~$T$ induces a partition of $E(G)$ into two parts as the edge sets that appear in the two connected components of~$T$ after deletion of~$e$. The number of vertices of $G$ that are incident to edges in both parts of this partition is the order of~$e$, denoted by $order(e,T)$. The \emph{branchwidth} of $G$, denoted by $\bw(G)$, is defined as $\bw(G) = \min_T \max_{e \in E(T)} order(e,T)$, where $\min_T$ is over all branch decompositions of~$G$.

While it is convenient to work with branchwidth in our proofs, we state our main result with the more well-known \emph{treewidth} $\tw(G)$ of a graph~$G$. This is justified by the following well-known connection between the two measures.

\begin{lemma}\label{lemma:bw_vs_tw} \textup{\cite[Lemma 12]{HarveyW17}}
If $\bw(G) \geq 2$, then
$\bw(G) - 1 \leq \tw(G) \leq \frac{3}{2}\bw(G)$.
\end{lemma}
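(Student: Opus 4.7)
The plan is to prove the two inequalities by explicit constructions converting between the two decomposition types. Both directions are classical consequences of the Robertson--Seymour correspondence between tree and branch decompositions; I sketch the main ideas.

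For the upper bound $\tw(G) \leq \frac{3}{2}\bw(G)$, I would start from an optimal branch decomposition $T$ of $G$ of width $k = \bw(G)$, which is a binary tree whose leaves correspond to $E(G)$. For each internal node $v$ of $T$, its three incident edges $f_1, f_2, f_3$ partition $E(G)$ into three blocks. Letting $M_i \subseteq V(G)$ denote the set of vertices incident to edges on both sides of $f_i$, we have $|M_i| \leq k$. I would then declare the bag at $v$ to be the set $X_v$ of vertices appearing in at least two of $M_1, M_2, M_3$. Each element of $X_v$ contributes to at least two of the $M_i$, so by double counting $2|X_v| \leq |M_1| + |M_2| + |M_3| \leq 3k$, giving $|X_v| \leq \lfloor 3k/2 \rfloor$. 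Leaves receive the two-vertex bag corresponding to their edge. One then verifies that the result is a tree decomposition of width at most $\frac{3}{2}\bw(G) - 1$.

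For the lower bound $\bw(G) - 1 \leq \tw(G)$, take an optimal tree decomposition $(T, \{X_t\})$ of width $\tw(G)$. After a standard preprocessing that makes $T$ of maximum degree three without changing its width, I would attach, for every edge $e = uv \in E(G)$, a new leaf $\ell_e$ at some node $t$ whose bag contains both $u$ and $v$ (such a node exists by the tree-decomposition axioms). Further subdividing to keep the tree binary produces a branch decomposition $T'$. The key check is that for every edge $f$ of $T'$, the vertices incident to edges of $G$ on both sides of $f$ are all contained in a single bag of the original $T$, yielding $\textup{order}(f, T') \leq \tw(G) + 1$ and hence $\bw(G) \leq \tw(G) + 1$.

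The main obstacle in both directions is verifying the validity of the constructed decomposition. For the upper bound one has to confirm the connectedness axiom, i.e.\ that $\{v : u \in X_v\}$ induces a connected subtree of $T$ for every $u \in V(G)$, and handle the boundary cases where a vertex fails to appear in at least two of the $M_i$ (forcing a careful treatment of nodes adjacent to leaves). For the lower bound, the difficulty lies in arranging the attachments of the new leaves and the subsequent binarization so that the order of \emph{every} newly introduced internal edge of $T'$ remains bounded by $\tw(G) + 1$. The ``at least two of three'' double-counting trick is the essential combinatorial insight producing the $\tfrac{3}{2}$ factor and making the bound tight up to the additive constant.
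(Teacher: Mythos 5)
Your sketch is correct and is exactly the classical Robertson--Seymour argument that the cited source (Harvey and Wood, and before them Graph Minors X) uses; the paper itself gives no proof and simply quotes the bound. The only remark worth adding is that your ``two of three'' construction in fact yields the slightly sharper $\tw(G) \leq \lfloor \tfrac{3}{2}\bw(G)\rfloor - 1$, which of course implies the stated inequality.
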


A separator $S$ in a connected graph $G$ is defined to be a vertex set such that $G\setminus S$ is non-empty and not connected. A graph $G$ is called $3$-connected if and only if it has at least $4$ vertices and, for every $S \subseteq V(G)$, $|S| \leq 2$, the graph $G \setminus S$ is connected. 

\paragraph*{Variables, assignments, v-trees.}
Boolean variables can have value 0 ($false$) or~1 ($true$). The notation $\ell_x$ refers to a literal for a variable $x$, that is, $x$ or its negation~$\overline{x}$. Given a set $X$ of Boolean variables, $\lit(X)$ denotes its set of literals.~A truth assignment to $X$ is a mapping $a: X \rightarrow \{0,1\}$. 
 If $a_X$ and $a_Y$ are assignments to \emph{disjoint} sets of variables $X$ and $Y$, then $a_X \cup a_Y$ denotes the combined assignment to $X \cup Y$. The set of assignments to $X$ is denoted by $\{0,1\}^X$. Let $f$ be a Boolean function, we denote by $\var(f)$ its variables and by $\sat(f)$ its set of models, i.e., assignments to $\var(f)$ on which~$f$ evaluates to $1$. A v-tree of $X$ is a binary tree $T$ whose leaves are labeled bijectively with the variables in $X$. A v-tree $T$ of $X$ induces a set of partitions $(X_1, X_2)$ of $X$ as follows: choose a vertex $v$ of $T$, setting $X_1$ to contain exactly the variables in $T$ that appear below~$v$ and $X_2:= X\setminus X_1$.

\paragraph*{Tseitin-Formulas.}\label{section:tseitin}

Tseitin formulas are systems of parity constraints whose structure is determined by a graph. Let $G=(V,E)$ be a graph and let $c : V \rightarrow \{0,1\}$ be a labeling of its vertices called a \emph{charge function}. The Tseitin-formula $T(G,c)$ has for each edge $e \in E$ a Boolean variable $x_e$ and for each vertex $v\in V$ a constraint $\chi_v : \sum_{e \in E(v)} x_e = c(v) \mod 2$. The Tseitin-formula $T(G,c)$ is then defined as $T(G,c) := \bigwedge_{v \in V} \chi_v$, i.e., the conjunction of the parity constraints for all $v\in V$. By $\overline{\chi_v}$ we denote the negation of $\chi_v$, i.e., the parity constraint on $(x_e)_{e\in E(v)}$ with charge $1-c(v)$.

\begin{proposition}\textup{~\cite[Lemma 4.1]{Urquhart87}}\label{proposition:satisfiability_tseitin_formula}
The Tseitin-formula $T(G,c)$ is satisfiable if and only if for every connected component~$U$ of~$G$ we have $\sum_{v \in U} c(v) = 0 \mod 2$.
\end{proposition}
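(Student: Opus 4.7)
The proof naturally splits along the biconditional, and since the constraints decouple across connected components it suffices to prove both directions assuming $G$ is connected.

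For the forward direction (necessity), the plan is to sum the parity constraints over all vertices of a component~$U$ and exploit double-counting. Concretely, if $\alpha$ is a satisfying assignment, then adding the identities $\sum_{e \in E(v)} \alpha(x_e) \equiv c(v) \pmod 2$ over all $v \in U$ gives
\[
\sum_{v \in U} \sum_{e \in E(v)} \alpha(x_e) \equiv \sum_{v \in U} c(v) \pmod 2.
\]
Since $U$ is a connected component of $G$, every edge with an endpoint in $U$ has both endpoints in $U$, so each variable $x_e$ appears exactly twice on the left-hand side and cancels modulo~$2$. Hence $\sum_{v\in U} c(v) \equiv 0 \pmod 2$.

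For the backward direction (sufficiency), the plan is to construct a satisfying assignment for each component independently via a spanning tree argument. Fix a connected component $U$ with $\sum_{v\in U} c(v) \equiv 0 \pmod 2$, choose a spanning tree $\tau$ of $G[U]$, and set $x_e := 0$ for every non-tree edge~$e$. I then induct on $|U|$: if $|U|=1$ the single constraint reads $0 \equiv c(v) \pmod 2$, which holds by hypothesis. Otherwise pick a leaf $v$ of $\tau$ with unique tree-edge $e$ incident to $v$, set $x_e := c(v)$ (noting all other edges at~$v$ already have value~$0$), and consider the graph $G[U \setminus \{v\}]$ with the neighbor~$u$ of $v$ in $\tau$ re-charged to $c(u) + c(v) \bmod 2$, all other charges unchanged. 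The new total charge is still $0 \bmod 2$, the graph $G[U \setminus \{v\}]$ is connected because we removed a leaf of a spanning tree, and by induction there is a satisfying assignment of the remaining variables; combined with the values chosen above one obtains a model of $T(G[U],c|_U)$. Taking the union of such assignments over all components yields a model of $T(G,c)$.

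The main conceptual step is the inductive construction in the backward direction, specifically the bookkeeping that moving the charge of a removed leaf onto its tree-neighbor preserves both connectivity and the even-charge invariant. The only potential pitfall is the corner case $|U|=1$, which is handled cleanly because $\sum_{v\in U} c(v) = c(v)$ is then forced to be $0$ and the vertex has no incident edges, so its parity constraint is trivially true. Everything else is routine manipulation modulo~$2$.
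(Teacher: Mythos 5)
Your proof is correct. The paper does not prove this proposition at all---it is quoted as a known fact from Urquhart's work \cite[Lemma~4.1]{Urquhart87}---so there is no in-paper argument to compare against; your two directions (double-counting each edge twice within a component for necessity, and the spanning-tree/leaf-elimination induction with charge transfer $c(u) \mapsto c(u)+c(v) \bmod 2$ for sufficiency) constitute the standard proof and are sound, including the $|U|=1$ base case.
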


\begin{proposition}\textup{\cite[Lemma 2]{GlinskihI17}}\label{proposition:number_model_tseitin_formula}
Let $G$ be a graph with $K$ connected components. If the Tseitin-formula $T(G,c)$ is satisfiable, then it has $2^{|E(G)| - |V(G)| + K}$ models.
\end{proposition}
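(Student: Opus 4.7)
The plan is to view $T(G,c)$ as a linear system over $\mathbb{F}_2$: one variable $x_e$ per edge and one equation $\sum_{e \in E(v)} x_e = c(v)$ per vertex, which in matrix form reads $Mx = c$ where $M \in \mathbb{F}_2^{V(G) \times E(G)}$ is the vertex-edge incidence matrix. By hypothesis the system is consistent, hence its solution set is an affine subspace of $\mathbb{F}_2^{E(G)}$ of dimension $|E(G)| - \mathrm{rank}(M)$, so the number of models of $T(G,c)$ equals $2^{|E(G)| - \mathrm{rank}(M)}$. The task therefore reduces to showing that $\mathrm{rank}(M) = |V(G)| - K$.

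For the upper bound on the rank, I would exploit the fact that each column of $M$ has exactly two $1$'s, namely the two endpoints of the corresponding edge. Consequently, for any connected component $U$ of $G$, summing the rows indexed by $U$ yields the zero vector over $\mathbb{F}_2$, since every edge of $G$ contributes either $0$ (edges not incident to $U$) or $1+1=0$ (edges fully inside $U$, no edge crosses components) to the sum. This produces $K$ distinct linear dependencies, one per component, which are themselves independent because they have disjoint supports. Hence $\mathrm{rank}(M) \leq |V(G)| - K$.

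For the matching lower bound I would argue component by component. Fix a component $U$, pick an arbitrary root $r \in U$, and choose a spanning tree of $U$ rooted at $r$. Order the non-root vertices $v_1,\dots,v_{|U|-1}$ by decreasing depth in this tree. For each $v_i$ let $e_i$ be the tree edge joining $v_i$ to its parent: then in the column of $e_i$ the row $v_i$ carries a $1$, and among rows $v_1,\dots,v_{|U|-1}$ already listed, no later row (a strict ancestor of $v_i$ in the chosen ordering) carries a $1$ in column $e_i$. This gives a triangular submatrix of $M$, so the $|U|-1$ rows indexed by $U \setminus \{r\}$ are linearly independent. Doing this in every component and taking the union of the corresponding rows exhibits $\sum_U (|U|-1) = |V(G)| - K$ linearly independent rows of $M$, giving $\mathrm{rank}(M) \geq |V(G)| - K$.

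Combining the two bounds yields $\mathrm{rank}(M) = |V(G)| - K$, whence $|\sat(T(G,c))| = 2^{|E(G)| - |V(G)| + K}$, as claimed. The only delicate step is the lower bound, since the upper bound is essentially immediate from the all-ones row-combinations; the spanning tree triangulation is the cleanest way I see to turn combinatorial connectivity into a concrete statement about linear independence of rows of~$M$.
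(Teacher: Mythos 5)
The paper does not actually prove this proposition: it is imported verbatim from \cite[Lemma 2]{GlinskihI17}, so there is no in-paper argument to compare against. Your proof is correct in substance and is essentially the standard one: the models of $T(G,c)$ form a coset of the kernel of the $\mathbb{F}_2$ vertex-edge incidence matrix $M$ (that kernel being the cycle space of $G$), so everything reduces to $\mathrm{rank}_{\mathbb{F}_2}(M)=|V(G)|-K$, which you establish via the $K$ component-wise row dependencies for the upper bound and a spanning-forest triangularization for the lower bound. The one slip is directional: with the non-root vertices of a component ordered by \emph{decreasing} depth, the only row other than $v_i$ carrying a $1$ in column $e_i$ is the parent of $v_i$, which has strictly smaller depth and therefore appears \emph{later} in your ordering, not earlier. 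So the correct claim is that no \emph{earlier} row has a $1$ in column $e_i$, and the resulting submatrix is lower rather than upper triangular; since a triangular matrix with unit diagonal is invertible either way, the conclusion $\mathrm{rank}(M)\geq |V(G)|-K$ stands, but you should either swap ``later'' for ``earlier'' or order by increasing depth. Two further small points: the argument tacitly uses that every column of $M$ has exactly two $1$'s, i.e., that $G$ has no loops, which matches the paper's setting; and isolated vertices are handled correctly, contributing no rows to the independent family while their all-zero rows account for some of the $K$ dependencies.
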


When conditioning the formula $T(G,c)$ on a literal $\ell_e \in \{x_e,\overline{x_e}\}$ for $e = ab$ in $E(G)$, the resulting function is another Tseitin formula $T(G,c) \vert \ell_e = T(G',c')$ where $G'$ is the graph $G$ without the edge $e$ (so $G' = G - e$) and $c'$ depends on $\ell_e$. If $\ell_e = \overline{x_e}$ then $c'$ equals $c$. If $\ell_e = x_e$ then $c' = c + 1_a + 1_b \mod 2$, where $1_v$ denotes the charge function that assigns $1$ to $v$ and $0$ to all other variables.

Since we consider Tseitin-formulas in the setting of proof systems for CNF-formulas, we will assume in the following that they are encoded as CNF-formulas. In this encoding, every individual parity constraint $\chi_v$ is expressed as a CNF-formula $F_v$ and $T(G,c):= \bigwedge_{v\in V} F_v$. Since it takes $2^{|E(v)|-1}$ clauses to write the parity constraint $\chi_v$, each clause containing $E(v)$ literals, we make the standard assumption that $E(v)$ is bounded, i.e., there is a constant upper bound $\Delta$ on the degree of all vertices in $G$.

\paragraph*{DNNF.} A circuit over $X$ in \emph{negation normal form} (NNF) is a directed acyclic graph whose leaves are labeled with literals in $\lit(X)$ or 0/1-constants, and whose internal nodes are labeled by $\lor$-gates or $\land$-gates. We use the usual semantics for the function computed by (gates of) Boolean circuits.
Every NNF can be turned into an equivalent NNF whose nodes have at most two successors in polynomial time. So we assume that NNF in this paper have only binary gates and thus define the size $|D|$ as the number of gates, which is then at most half the number of wires. Given a gate $g$, we denote by $\var(g)$ the variables for the literals appearing under $g$. When $g$ is a literal input $\ell_x$, we have $\var(g) = \{x\}$, and when it is a 0/1-input, we define $\var(g) = \emptyset$. A gate with two children~$g_l$ and~$g_r$ is called \emph{decomposable} when $\var(g_l) \cap \var(g_r) = \emptyset$, and it is called \emph{complete} (or \emph{smooth}) when $\var(g_l) = \var(g_r)$. An NNF whose $\land$-gates are all decomposable is called a \emph{decomposable NNF (DNNF)}. We call~a DNNF \emph{complete} when all its $\lor$-gates are complete. Every DNNF can be made complete in polynomial time. For every Boolean function~$f$ on finitely many variables, there exists~a DNNF computing~$f$.
 
When representing Tseitin-formulas by DNNF, we will use the following:
\begin{lemma}\label{lemma:reduction_to_T(G,0)}
Let $G$ be a graph and let $c$ and $c'$ be two charge functions such that $T(G,c)$ and $T(G,c')$ are satisfiable Tseitin-formulas. Then $T(G,c)$ can be computed by a DNNF of size $s$ if and only if this is true for $T(G,c')$.
\end{lemma}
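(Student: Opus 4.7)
The plan is to exhibit, for any two satisfiable Tseitin-formulas on the same graph, a variable-literal substitution that transforms one into the other. Concretely, I will find an edge subset $F \subseteq E(G)$ such that swapping $x_e \leftrightarrow \overline{x_e}$ for every $e \in F$ sends $T(G,c)$ to $T(G,c')$. Such a substitution only relabels the literal-leaves of a DNNF; the gate structure and the variable sets feeding into each $\land$-gate are unchanged, so decomposability is preserved and the new DNNF has exactly the same size. Since the substitution is an involution, this yields the equivalence in both directions.

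To produce $F$, set $d := c + c' \bmod 2$. Both formulas being satisfiable, Proposition~\ref{proposition:satisfiability_tseitin_formula} gives $\sum_{v \in U} c(v) \equiv \sum_{v \in U} c'(v) \equiv 0 \pmod 2$, and hence $\sum_{v \in U} d(v) \equiv 0 \pmod 2$, on every connected component $U$ of $G$. By the standard T-join existence argument, namely pair up the vertices with $d(v) = 1$ within each component and take the symmetric difference of paths connecting the pairs, there exists $F \subseteq E(G)$ with $|F \cap E(v)| \equiv d(v) \pmod 2$ for every vertex $v$.

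The verification that this $F$ does the job is a one-line calculation. In the constraint $\chi_v : \sum_{e \in E(v)} x_e = c(v) \bmod 2$, replacing $x_e$ by $1 - x_e$ for each $e \in F \cap E(v)$ shifts the right-hand side by $|F \cap E(v)| \equiv d(v) \pmod 2$, turning $\chi_v$ into the parity constraint of $T(G,c')$ at $v$. Since the substitution is defined edge-by-edge and thus applies consistently at both endpoints of every $e \in F$, doing this at every vertex simultaneously establishes that $T(G,c')$ is $T(G,c)$ after the literal swap, from which the lemma follows.

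I expect no serious obstacle: the only non-elementary ingredient is the T-join existence fact, which is folklore. One could equivalently phrase the lemma as a reduction to the canonical case $c' \equiv 0$, but the symmetric formulation above avoids singling out a distinguished charge function.
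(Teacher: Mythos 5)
Your proof is correct and follows the same route as the paper: transform $T(G,c)$ into $T(G,c')$ by negating the variables on a suitable edge set and observe that this relabeling of literal-leaves preserves DNNF size and decomposability. The only difference is that the paper delegates the existence of the substitution to a citation (Proposition~26 of the Itsykson et al.\ paper), whereas you supply the T-join argument explicitly, which is a correct and self-contained way to fill in that step.
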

\begin{proof}[sketch]
 $T(G,c)$ can be transformed into $T(G,c')$ by substituting some variables by their negations, see~\cite[Proposition~26]{ItsyksonRSS19}. So every DNNF for $T(G,c)$ can be transformed into one for $T(G, c')$ by making the same substitutions. \qed
\end{proof}

Proof trees of a DNNF $D$ are tree-like sub-circuits of $D$ constructed iteratively as follows: we start from the root gate and add it to the proof tree. Whenever~an $\land$-gate is met, both its child gates are added to the proof tree. Whenever a $\lor$-gate is met, exactly one child is is added to the proof tree. Each proof tree of~$D$ computes a conjunction of literals. By distributivity, the disjunction of the conjunctions computed by all proof trees of $D$ computes the same function as~$D$. When $D$ is complete, every variable appears exactly once per proof tree, so every proof tree of a complete DNNF encodes a single model. 

\paragraph*{Branching programs.} A branching program (BP) $B$ is a directed acyclic graph with a single source, sinks that uniquely correspond to the values of a finite set~$Y$, and whose inner nodes, called \emph{decision nodes} 
are each labeled by a Boolean variable $x \in X$ and have exactly two output wires called the 0- and 1-wire pointing to two nodes respectively called its 0- and the 1-child. The variable $x$ appears on a path in $B$ if there is a decision node~$v$ labeled by $x$ on that path. A truth assignment $a$ to $X$ induces a path in $B$ which starts at the source and, when encountering a decision node for a variable $x$, follows the 0-wire (resp. the 1-wire) if $a(x)=0$ (resp.~$a(x)=1$). The BP $B$ is defined to compute the value $y\in Y$ on an assignment $a$ if and only if the path of $a$ leads to the sink labeled with $y$. We denote this value $y$ as $B(a)$. Let $f : X \rightarrow Y$ be a function where~$X$ is a finite set of Boolean variables and $Y$ any finite set. Then we say that $B$ computes $f$ if for every assignment $a\in \{0,1\}^X$ we have $B(a) = f(a)$. We say that a node $v$ in $B$ computes a function $g$ if the BP we get from~$B$ by deleting all nodes that are not reachable from $v$ computes~$g$.
	

Let $R \subseteq \{0,1\}^X \times Y$ be a relation where $Y$ is again finite. Then we say that~a BP $B$ computes $R$ if for every assignment $a$ we have that $(a, B(a))\in R$. Let $T(G,c)$ be an unsatisfiable Tseitin-formula for a graph $G=(V,E)$. Then we define the two following relations: $\search_{T(G,c)}$ consists of the pairs $(a, C)$ such that $a$ is an assignment to $T(G,c)$ that does not satisfy the clause $C$ of $T(G,c)$. The relation $\searchvx(G,c)$ consists of the pairs $(a,v)$ such that $a$ does not satisfy the parity constraint $\chi_v$ of a vertex $v\in V$. Note that $\search_{T(G,c)}$ and $\searchvx(G,c)$ both give a reason why an assignment $a$ does not satisfy $T(G,c)$ but the latter is more coarse: $\searchvx(G,c)$ only gives a constraint that is violated while $\search_{T(G,c)}$ gives an exact clause that is not satisfied.

\paragraph*{Regular Resolution.}

We only introduce some minimal notions of proof complexity here; for more details and references the reader is referred to the recent survey~\cite{BussN2021}.
Let $C_1 = x\lor D_1$ and $C_2 = \overline{x}\lor D_2$ be two clauses such that $D_1, D_2$ contain neither $x$ nor $\overline{x}$. Then the clause $D_1\lor D_2$ is inferred by resolution of $C_1$ and $C_2$ on $x$. A resolution refutation of length $s$ of a CNF-formula $F$ is defined to be a sequence $C_1, \ldots, C_s$ such that $C_s$ is the empty clause and for every $i\in [s]$ we have that $C_i$ is a clause of $F$ or it is inferred by resolution of two clauses $C_j, C_\ell$ such that $j,\ell< i$. It is well-known that $F$ has a resolution refutation if and only if $F$ is unsatisfiable.

To every resolution refutation $C_1, \ldots, C_s$ we assign a directed acyclic graph~$G$ as follows: the vertices of $G$ are the clauses $\{C_i\mid i\in [s]\}$. Moreover, there is an edge $C_jC_i$ in $G$ if and only if $C_i$ is inferred by resolution of $C_j$ and some other clause $C_\ell$ on a variable $x$ in the refutation. We also label the edge $C_jC_i$ with the variable $x$. Note that there might be two pairs of clauses $C_j, C_\ell$ and $C_{j'},C_{\ell'}$ such that resolution on both pairs leads to the same clause $C_i$. If this is the case, we simply choose one of them to make sure that all vertices in $G$ have indegree at most $2$. A resolution refutation is called \emph{regular} if on every directed path in~$G$ every variable $x$ appears at most once as a label of an edge. It is known that there is a resolution refutation of $F$ if and only if a regular resolution refutation of $F$ exists~\cite{DavisP60}, but the latter are in general longer~\cite{AlekhnovichJPU07,Urquhart11}.

In this paper, we will not directly deal with regular resolution proofs thanks to the following well-known result.

\begin{theorem}\textup{\cite{LovaszNNW95}}
 For every unsatisfiable CNF-formula $F$, the length of the shortest regular resolution refutation of $F$ is the size of the smallest $1$-BP computing $\search_F$.
\end{theorem}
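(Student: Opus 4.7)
The plan is to prove both directions of the equality: first that any regular resolution refutation of $F$ of length $s$ can be converted into a $1$-BP of size $s$ computing $\search_F$, and then that any $1$-BP of size $s$ computing $\search_F$ can be converted into a regular resolution refutation of $F$ of length $s$.

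For the forward direction, I would take the DAG $G$ of a regular resolution refutation $C_1,\ldots,C_s$ and reverse its edges. The empty clause $C_s$ becomes the unique source, each axiom $C_i$ that is a clause of $F$ becomes a sink labeled by the clause $C_i$ itself, and each internal clause $C_i$ derived by resolving on variable $x$ becomes a decision node labeled $x$. The $0$-wire points to the parent of $C_i$ in $G$ containing $x$ and the $1$-wire to the parent containing $\overline{x}$. The correctness invariant, proved by induction along the path followed by an assignment $a$, is that if the assignment $a$ reaches a node whose associated clause $D$ is falsified by $a$ then the next clause it reaches is also falsified: indeed $D$ is the resolvent of a clause $x \lor D_1$ and a clause $\overline{x} \lor D_2$, and $a$ must falsify the one whose literal on $x$ evaluates to $0$. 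Starting at the source (the empty clause is trivially falsified) the path thus ends at a sink labeled by a clause of $F$ falsified by $a$, proving that the BP computes $\search_F$. The fact that along every path each variable appears at most once as a label translates exactly into the regularity condition on the refutation, so the BP is a $1$-BP.

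For the reverse direction, starting from a $1$-BP $B$ computing $\search_F$, I would label each node $v$ by a clause $C_v$ in reverse topological order. For a sink labeled with $C \in F$, set $C_v := C$. For a decision node $v$ on a variable $x$ with children $u_0$ and $u_1$, take $C_v$ to be the resolvent of $C_{u_0}$ and $C_{u_1}$ on $x$ when both polarities of $x$ are present, and otherwise take $C_v$ to be whichever child clause does not mention $x$ (introducing at most a ``weakening'' step that can be folded away without increasing length). The invariant, again by induction, is that every assignment $a$ whose path in $B$ reaches $v$ falsifies $C_v$, and that $C_v$ mentions no variable queried on any source-to-$v$ path. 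The read-once property of $B$ is what makes the second half of the invariant go through and also ensures that the resulting refutation is regular. Applying the invariant at the source node (which every assignment reaches) forces $C_{\text{source}}$ to be the empty clause, producing a refutation of length at most $|B|$.

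The main obstacle is the second direction, specifically making the clause-labeling procedure well-defined and of the right size. The subtlety is that at a decision node on $x$, the two children's clauses are only guaranteed to ``fit together'' into a resolvent on $x$ because the $1$-BP property rules out $x$ having been queried already below $v$, so $C_{u_0}$ and $C_{u_1}$ cannot already contain $x$ or $\overline{x}$ in conflicting ways; and the inductive clause ``no queried variable appears in $C_v$'' is exactly what propagates this structural fact upward. Once this invariant is set up cleanly the rest of the argument is routine, and the two conversions being inverses of each other (up to the usual identification of isomorphic DAGs) yields the claimed exact equality between the length of the shortest regular resolution refutation of $F$ and the size of the smallest $1$-BP for $\search_F$.
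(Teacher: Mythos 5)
The paper does not prove this statement; it cites it from~\cite{LovaszNNW95}, so there is no in-paper argument to compare against and I judge your proposal on its own. Your forward direction (refutation to $1$-BP) is the standard argument and is correct. The reverse direction, however, rests on a false auxiliary invariant. You require that ``$C_v$ mentions no variable queried on any source-to-$v$ path.'' This already fails at the base case: if a sink is labeled with $C \in F$ and some source-to-sink path left a variable of $C$ unqueried (or queried it with the satisfying polarity), one could extend that path to an assignment reaching the sink while satisfying $C$, contradicting that the BP computes $\search_F$. So a sink's clause mentions \emph{only} variables queried on the way to it --- the exact opposite of your invariant. Worse, if your invariant held at $u_0$ and $u_1$, then neither $C_{u_0}$ nor $C_{u_1}$ could contain $x$ or $\overline{x}$ (since $x$ is queried at $v$ on every source-to-$u_i$ path), so there would never be anything to resolve on. Your diagnosis of why the resolution step is well defined is also misplaced: what matters is not that read-once-ness forbids querying $x$ \emph{below} $v$ (sink clauses below $u_0$ may well contain $x$, and must, if a resolution on $x$ is to occur at $v$), but that it forbids querying $x$ on paths \emph{from the source to} $v$.

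The construction itself (resolve $C_{u_0}$ and $C_{u_1}$ on $x$, or copy the child clause when it omits $x$) is the right one, and your first invariant --- every assignment reaching $v$ falsifies $C_v$ --- is the correct one; but you have not shown that it propagates through a resolvent, and it is not automatic. If $C_{u_0} = x \lor D_0$ and $C_{u_1} = \overline{x} \lor D_1$, an assignment $a$ reaching $v$ with $a(x) = 0$ falsifies $D_0$ by induction, but nothing you say forces it to falsify $D_1$. The missing step is: since $B$ is read-once and $x$ is decided at $v$, no source-to-$v$ path queries $x$, hence the set of assignments reaching $v$ is closed under flipping the value of $x$; applying the induction hypothesis to $a$ with $x$ flipped to $1$ (which reaches $u_1$) and noting that $D_1$ does not mention $x$ yields that $a$ falsifies $D_1$ as well. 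The same closure argument shows $\overline{x} \notin C_{u_0}$ and $x \notin C_{u_1}$, so the resolution step is legal. With your second invariant deleted and this argument inserted, the proof goes through and gives the claimed equality.
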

Since in our setting, from an unsatisfied clause we can directly inferred an unsatisfied parity constraint, we can use the following simple consequence.
\begin{corollary}\label{corollary:1BP_size_searchvx}
For every unsatisfiable Tseitin-formula $T(G,c)$, the length of the shortest regular resolution refutation of $T(G,c)$ is at least the size of the smallest $1$-BP computing $\searchvx(G,c)$.
\end{corollary}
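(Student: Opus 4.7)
The plan is to derive the corollary directly from the preceding theorem of Lovász et al.\ by exhibiting a size-preserving reduction from $1$-BPs computing $\search_{T(G,c)}$ to $1$-BPs computing $\searchvx(G,c)$. First I would recall the setup: each clause $C$ of the CNF-encoding of $T(G,c)$ belongs to the sub-formula $F_v$ of a unique vertex $v \in V(G)$, because the clauses encoding distinct parity constraints $\chi_v,\chi_{v'}$ are syntactically different (they are over disjoint edge sets when $v\neq v'$, or more to the point, the encoding $\bigwedge_v F_v$ fixes an assignment of each clause to its vertex). So there is a well-defined map $\pi$ from clauses of $T(G,c)$ to vertices of $G$ with the property that $a$ falsifies $C$ implies $a$ falsifies $\chi_{\pi(C)}$, since $C$ is a disjunct in the CNF encoding $F_{\pi(C)}$ of $\chi_{\pi(C)}$.

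Next I would use this map to transform a $1$-BP. Given a $1$-BP $B$ of size $s$ computing $\search_{T(G,c)}$, construct $B'$ by leaving all decision nodes and wires untouched and relabeling each output sink carrying a clause~$C$ by the vertex $\pi(C) \in V(G)$; if this produces several sinks with the same vertex label, merge them into one. The underlying graph is unchanged except for sink relabeling/merging, so $B'$ is still a $1$-BP, remains read-once, and has size at most $s$. For any assignment $a$, if the path induced by $a$ in $B$ ends at the sink labeled by some clause $C$, then the same path in $B'$ ends at the sink labeled $\pi(C)$. Because $(a,C)\in \search_{T(G,c)}$ means $a$ falsifies $C$, and $a$ falsifying $C$ implies $a$ falsifies $\chi_{\pi(C)}$, we get $(a,\pi(C))\in \searchvx(G,c)$. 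Hence $B'$ computes $\searchvx(G,c)$.

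Combining this with the Lovász--Naor--Newman--Wigderson theorem finishes the argument: the length of the shortest regular resolution refutation of $T(G,c)$ equals the size of the smallest $1$-BP computing $\search_{T(G,c)}$, which by the construction above is at least the size of the smallest $1$-BP computing $\searchvx(G,c)$.

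Since this is really just a relabeling of output sinks, there is no genuine obstacle here; the only point to be careful about is to verify that the map $\pi$ is well-defined given the chosen CNF-encoding of the $\chi_v$'s and that merging sinks preserves the $1$-BP structure (it does, because merging sinks never creates a repeated variable along any source-to-sink path). \qed
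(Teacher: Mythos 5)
Your proposal is correct and matches the paper's reasoning: the paper derives the corollary from the Lov\'asz et al.\ theorem by exactly the observation you make explicit, namely that an unsatisfied clause $C$ of the CNF-encoding immediately yields an unsatisfied parity constraint $\chi_{\pi(C)}$, so relabeling (and merging) the sinks of a $1$-BP for $\search_{T(G,c)}$ gives a no-larger $1$-BP for $\searchvx(G,c)$. The paper leaves this as a one-line remark; your write-up just spells out the same argument in detail.
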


\section{Reduction From Unsatisfiable to Satisfiable Formulas}

To show our main result, we give a reduction from unsatisfiable to satisfiable Tseitin-formulas as in~\cite{ItsyksonRSS19}. There it was shown that, given a $1$-BP $B$ computing $\searchvx(G,c)$ for an unsatisfiable Tseitin-formula $T(G,c)$, one can construct a $1$-BP $B'$ computing the function of a \emph{satisfiable} Tseitin-formula $T(G, c^*)$ such that $|B'|$ is quasipolynomial in $|B|$. Then good lower bounds on the size of $B'$ yield lower bounds for regular refutation by Corollary~\ref{corollary:1BP_size_searchvx}. To give tighter results, we give a version of the reduction from unsatisfiable to satisfiable Tseitin-formulas where the target representation for $T(G, c^*)$ is not $1$-BP but the more succinct DNNF. This lets us decrease the size of the representation from pseudopolynomial to polynomial which, with tight lower bounds in the later parts of the paper, will yield Theorem~\ref{theorem:main_result}.

\begin{theorem}\label{theorem:from_refutation_to_sat_DNNF}
Let $T(G,c)$ be an unsatisfiable Tseitin-formula where $G$ is connected and let $S$ be the length of its smallest resolution refutation. Then there exists for every satisfiable Tseitin-formula $T(G,c^*)$ a DNNF of size $O(S\times|V(G)|)$ computing it. 
\end{theorem}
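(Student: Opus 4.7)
The plan is to apply Corollary~\ref{corollary:1BP_size_searchvx} first: from a shortest regular resolution refutation of length $S$, extract a $1$-BP $B$ of size at most $S$ computing the search relation $\searchvx(G,c)$. Then, by Lemma~\ref{lemma:reduction_to_T(G,0)}, it suffices to construct a DNNF of size $O(S\cdot|V(G)|)$ for one specific satisfiable Tseitin-formula on $G$. A convenient choice is $c^* := c + 1_w \mod 2$ for an arbitrarily fixed vertex $w\in V(G)$: since $G$ is connected and $T(G,c)$ is unsatisfiable, $\sum_v c(v) = 1 \mod 2$, so $T(G,c^*)$ is satisfiable. The key feature of this choice is that an assignment $a$ is a model of $T(G,c^*)$ if and only if it violates in $T(G,c)$ exactly one parity constraint, namely $\chi_w$; in particular, the run of $B$ on any model of $T(G,c^*)$ must end at the sink labelled $w$.

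Next, I would convert $B$ into a DNNF in the standard way: each decision node querying $x_e$ with $0$-child $n_0$ and $1$-child $n_1$ is replaced by an $\lor$-gate whose children are the two $\land$-gates $\overline{x_e}\land D(n_0)$ and $x_e\land D(n_1)$, where $D(n)$ is the translated sub-DNNF rooted at $n$. Read-onceness of $B$ guarantees decomposability of these $\land$-gates and yields a DNNF skeleton of size $O(S)$. Every sink of $B$ labelled by a vertex $v\neq w$ is replaced by the constant $0$, since any path reaching such a sink certifies that $\chi_v = \chi_v^*$ is violated, ruling out a completion to a model of $T(G,c^*)$. What is left is to attach at the $w$-sink a gadget that enforces, on the edge variables not yet read along the path, every remaining constraint $\chi_{v'}^*$ with $v'\neq w$.

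The main difficulty is designing this gadget within an $O(|V(G)|)$ size budget, because different paths arrive at the $w$-sink having queried different subsets of edges, so the residual parity constraints depend on the path. To avoid a per-path blow-up, I plan to propagate parity information through the DNNF as it is being built, by maintaining for each vertex $v$ a pair of ``parity state'' nodes $s_v^0, s_v^1$ corresponding to the two possible residual parities at $v$. When $B$ queries an edge $x_e = x_{ab}$, one updates the parity states at both endpoints $a$ and $b$ simultaneously, combining them through a decomposable $\land$-gate that pairs $s_a^i$ with $s_b^{1-i}$: this is the pattern sketched by the gadgets \decisionNodeSk, \decisionNodeAndSk\ and \decisionNodeAndZk\ in the preamble. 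At the $w$-sink the completion gadget simply reads off, for every $v\neq w$, the final parity state and demands that it match $c^*(v)$; the fact that the $w$-sink has been reached already certifies $\chi_w^*$. Since each edge is queried only once by $B$ and each vertex carries only a constant number of states, the global overhead of the parity-tracking overlay and of the sink gadget is $O(|V(G)|)$, giving a final DNNF of size $O(S\cdot|V(G)|)$.

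The hard part will be verifying decomposability of all $\land$-gates in the combined construction (BP decision gadgets plus parity-propagation gadgets plus the $w$-sink completion) and showing that the resulting circuit accepts exactly $\sat(T(G,c^*))$. Connectedness of $G$ is essential for the latter, since it ensures that the local parity updates at the endpoints of the queried edges aggregate consistently with the parity certified at $w$; and the read-once property of $B$, together with the disjointness of the edge sets on which the two branches of each decision gadget operate, is what makes the overlay compatible with decomposability.
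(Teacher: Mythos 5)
Your opening moves coincide with the paper's: Corollary~\ref{corollary:1BP_size_searchvx} gives a $1$-BP $B$ of size at most $S$ for $\searchvx(G,c)$, and Lemma~\ref{lemma:reduction_to_T(G,0)} lets one target a single convenient satisfiable charge. But the core of the construction is missing, and the ``parity-state overlay plus $w$-sink gadget'' does not work as described. First, you never invoke the well-structuredness of size-minimal $1$-BPs for $\searchvx(G,c)$ (Lemma~\ref{lemma:1BP_are_well_structured}). Without it a node of $B$ has no canonical residual subgraph: two paths into the same node may have read different sets of edges incident to a given vertex $v$, so ``the residual parity of $v$ at this node'' is not well defined and no per-node invariant can be maintained. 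Second, and more seriously, in a well-structured $1$-BP for $\searchvx$ a path does \emph{not} read all edge variables: whenever a bridge is queried, each child keeps only the component carrying the odd charge and the other component's edges are never queried below it. Hence an assignment reaching the $w$-sink leaves entire components unconstrained, and to accept exactly $\sat(T(G,c^*))$ your circuit must still enforce satisfiable Tseitin-subformulas on those discarded components. These cannot be supplied by an $O(|V(G)|)$-size gadget built from scratch --- the second half of this very paper shows such subformulas can require exponentially large DNNFs --- so they must be obtained by \emph{reusing} the subcircuit recursively built for the sibling child of the bridge node. Your proposal contains no mechanism for this. (Your size accounting is also inconsistent: an $O(S)$ skeleton plus an $O(|V(G)|)$ overlay gives $O(S+|V(G)|)$, not $O(S\cdot|V(G)|)$; the $|V(G)|$ factor has to come from per-BP-node work.)

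The paper's actual proof (Lemma~\ref{lemma:from_well_struct_1BP_to_DNNF}) is a bottom-up induction over the DAG of the well-structured $B$: for every node $u_k$, computing $\searchvx(G_k,c_k)$ with $G_k$ connected and $T(G_k,c_k)$ unsatisfiable, and for every $v\in V(G_k)$, it builds a gate $g_v$ computing the \emph{satisfiable} formula $T(G_k,c_k+1_v)$ --- $O(|V(G)|)$ gates per BP node, whence the bound. When the queried edge $e=ab$ is not a bridge, $g_v$ is the familiar $(\overline{x_e}\wedge g^0_v)\vee(x_e\wedge g^1_v)$; when $e$ is a bridge, the two children handle the two distinct components $G^a_k$ and $G^b_k$, and $g_v$ (for $v\in V(G^a_k)$) is $\ell_e\wedge(g^i_v\wedge g^{1-i}_b)$, a \emph{conjunction of one gate from each child}, decomposable because the components share no edges, with the $\overline{\ell_e}$-branch contributing the constant $0$. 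This multiplicative reuse of both children is exactly what DNNFs permit and $1$-BPs do not, and it is the step that replaces your sink gadget. The gadget shapes you reverse-engineered from the preamble are indeed those of these two cases, but the nodes $s^i_v$ are not per-vertex parity bits: each is a gate computing an entire satisfiable Tseitin-subformula on a residual component, and establishing that semantics is precisely the induction your proposal leaves open.
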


In the proof of Theorem~\ref{theorem:from_refutation_to_sat_DNNF}, we heavily rely on results from~\cite{ItsyksonRSS19} in particular the notion of well-structuredness that we present in Section~\ref{sct:well-structured}. In Section~\ref{sct:unsattosat} we will then prove Theorem~\ref{theorem:from_refutation_to_sat_DNNF}.

\subsection{Well-structured branching programs for $\searchvx(G,c)$}\label{sct:well-structured}

In a well-structured 1-BP computing $\searchvx(G,c)$, every decision node~$u_k$ for a variable $x_e$ will compute $\searchvx(G_k,c_k)$ where $G_k$ is a \emph{connected} sub-graph of~$G$ containing the edge $e := ab$, and $c_k$ is a charge function such that $T(G_k, c_k)$ is unsatisfiable. Since $u_k$ deals with $T(G_k,c_k)$, its 0- and 1-successors~$u_{k_0}$ and~$u_{k_1}$ will work on $T(G_k,c_k)|\ell_e$ for $\ell_e = \overline{x_e}$ and $\ell_e = x_e$, respectively. $T(G_k,c_k)|\ell_e$ is a Tseitin-formula whose underlying graph is $G_k - e$ and whose charge function is $c_k$ or $c_k + 1_a + 1_b \mod 2$ depending on $\ell_e$. For convenience, we introduce the notation $\gamma_k(x_e) = c_k + 1_a + 1_b  \mod 2$ and $\gamma_k(\overline{x_e}) = c_k$. Since~$G_k$ is connected, $G_k - e$ has at most two connected components. Let $G^a_k$ and~$G^b_k$ denote the components of $G_k - e$ containing $a$ and $b$, respectively. Note that $G^a_k = G^b_k$ when~$e$ is not a bridge of $G_k$. Let $\gamma^a_k(\ell_e)$ and $\gamma^b_k(\ell_e)$ denote the restriction of $\gamma_k(\ell_e)$ to the vertices of $G^a_k$ and $G^b_k$, respectively. While the graph for $T(G_k,c_k)|\ell_e$ has at most two connected components, exactly one of them holds an odd total charge, so only the Tseitin-formula corresponding to that component is unsatisfiable. Well-structuredness states that $u_{k_0}$ and $u_{k_1}$ each deal with that unique connected component. 

\begin{figure}[t]
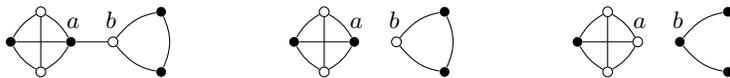

\centering
\begin{subfigure}{.3\textwidth}
 \centering \graph
 \end{subfigure}
\begin{subfigure}{.3\textwidth}
 \centering \graphMinusZeroBridge
 \end{subfigure}
\begin{subfigure}{.3\textwidth}
 \centering \graphMinusOneBridge
 \end{subfigure}
 \caption{The graphs of Example~\ref{ex:wellstructured}. On the left the graph $G_k$, in the middle the result after assigning $0$ to $x_e$, on the right after assigning $1$ to $x_e$.}\label{fig:example}
\end{figure}

\begin{example}\label{ex:wellstructured}
Consider the graph $G_k$ shown on the left in Figure~\ref{fig:example}. Black nodes have charge~$0$ and white nodes have charge~$1$. The corresponding Tseitin-formula $T(G_k,c_k)$ is unsatisfiable because there is an odd number of white nodes. Let $e := ab$. Then $T(G_k,c_k)|\overline{x_e}$ is the Tseitin-formula for the graph $G_k - e$ with charges as shown in the middle of Figure~\ref{fig:example}. Note that  $T(G_k,c_k)|\overline{x_e}$ is unsatisfiable because of the charges in the triangle component $G_k^b$. The repartition of charges for $T(G_k,c_k)|x_e$ illustrated on the right of Figure~\ref{fig:example} shows that $T(G_k,c_k)|x_e$ is unsatisfiable because of the charges in the rombus component~$G_k^a$. Well-structuredness will ensure that, if $u_k$ computes $\searchvx(G_k,c_k)$ and decides $x_e$, then $u_{k_0}$ computes $\searchvx(G_k^b,\gamma^b_k(\overline{x_e}))$ and $u_{k_1}$ computes $\searchvx(G_k^a,\gamma^a_k(x_e))$.
\end{example}

\begin{definition}
Let $T(G,c)$ be an unsatisfiable Tseitin-formula where $G$ is a connected graph. A branching program $B$ computing $\searchvx(G,c)$ is well-structured when, for all nodes $u_k$ of $B$, there exists a connected subgraph $G_k$ of $G$ and a charge function $c_k$ such that $T(G_k, c_k)$ is unsatisfiable, $u_k$ computes $\searchvx(G_k,c_k)$ and 
\begin{enumerate}
\item if $u_k$ is the source, then $G_k = G$ and $c_k = c$,
\item if $u_k$ is a sink corresponding to $v \in V(G)$, then $G_k = (\{v\},\emptyset)$ and $c_k = 1_v$,
\item if $u_k$ is a decision node for $x_{ab}$ with 0- and 1- successors $u_{k_0}$ and $u_{k_1}$, set $\ell_0 = \overline{x_{ab}}$ and $\ell_1 = x_{ab}$, then for all $i \in \{0,1\}$, $(G_{k_i},c_{k_i}) = (G^a_k,\gamma^a_k(\ell_i))$ if $T(G^a_k,\gamma^a_k(\ell_{i}))$ is unsatisfiable, otherwise $(G_{k_i},c_{k_i}) = (G^b_k,\gamma^b_k(\ell_i))$.
\end{enumerate}
\end{definition}

We remark that our definition is a slight simplification of that given by Itsykson et al.~\cite{ItsyksonRSS19}. It can easily be seen that ours is implied by theirs (see Definition 11 and Proposition 16 in~\cite{ItsyksonRSS19}). 

\begin{lemma}\textup{\cite[Lemma 17]{ItsyksonRSS19}}\label{lemma:1BP_are_well_structured} Let $T(G,c)$ be an unsatisfiable Tseitin-formula where $G$ is connected and let $B$ be a 1-BP of minimal size\footnote{\cite[Lemma 17]{ItsyksonRSS19} is for \emph{locally minimal} 1-BP, which encompass minimal size 1-BP.} computing the relation $\searchvx(G,c)$. Then $B$ is well-structured.
\end{lemma}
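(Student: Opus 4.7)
The plan is to proceed by induction along a topological order of $B$, maintaining as invariant that every reached node $u_k$ admits a connected subgraph $G_k \subseteq G$ and a charge function $c_k$ on $V(G_k)$ such that $T(G_k,c_k)$ is unsatisfiable and $u_k$ computes $\searchvx(G_k,c_k)$. The source trivially satisfies this with $(G_k,c_k) = (G,c)$ since $B$ computes $\searchvx(G,c)$. A sink labeled with $v \in V(G)$ is covered by setting $G_k = (\{v\},\emptyset)$ and $c_k = 1_v$, since such a sink outputs $v$ on the (empty) assignment below it and $\searchvx((\{v\},\emptyset),1_v)$ returns $v$.

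For the inductive step at a decision node $u_k$ labeled $x_e$ with $e = ab$: by induction $u_k$ computes $\searchvx(G_k,c_k)$, and because $B$ is read-once, the node $u_{k_i}$ is only reached by assignments fixing $x_e$ to $i$. Hence the function computed at $u_{k_i}$ is $\searchvx$ applied to the conditioned formula $T(G_k - e, \gamma_k(\ell_i))$. The total charge sum is preserved mod $2$ when an edge is removed (either both endpoint charges flip or neither does), and $T(G_k,c_k)$ is unsatisfiable, so by Proposition~\ref{proposition:satisfiability_tseitin_formula} exactly one connected component of $G_k - e$ carries an odd total charge and is unsatisfiable, while the other (if it exists) is satisfiable. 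Since $\searchvx$ of the conditioned formula can only output a vertex of the unsatisfiable component, $u_{k_i}$ always returns a vertex of that component.

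It remains to argue that $u_{k_i}$ is exactly $\searchvx(G_{k_i}, c_{k_i})$ where $G_{k_i}$ is the unsatisfiable component and $c_{k_i}$ is the restriction of $\gamma_k(\ell_i)$. I would invoke minimality of $B$. Suppose some node below $u_{k_i}$ queried a variable $x_{e'}$ with $e'$ in the satisfiable component. Because the output always lies in the unsatisfiable component and because $B$ is read-once so the variables of the satisfiable component are not needed later either, the two children of such a decision node would compute the same function on every assignment reaching it. Bypassing the query then yields a strictly smaller $1$-BP still computing $\searchvx(G,c)$, contradicting minimality. Thus $u_{k_i}$ queries only variables of $G_{k_i}$, and since it must output a correct vertex of $G_{k_i}$ on every completion, it computes exactly $\searchvx(G_{k_i}, c_{k_i})$, closing the induction.

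The main obstacle will be making the bypass step fully rigorous: one needs to define the set of assignments reaching a given node under the read-once assumption, then check that if a queried variable lies in the satisfiable component then both children agree on this set, and finally verify that merging preserves correctness while strictly reducing size. A subtle point is that several paths in $B$ may merge at the same node, so the set of reachable partial assignments may be large; however, all of them leave the satisfiable component entirely free, which is exactly what allows the redundancy argument to go through and forces the well-structured form described in the three cases of the definition.
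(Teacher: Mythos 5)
First, a remark on the comparison itself: the paper does not prove this lemma at all --- it is imported from \cite{ItsyksonRSS19} (their Definition~11, Proposition~16 and Lemma~17), so you are reconstructing an argument the authors deliberately outsourced. Your reconstruction has the right skeleton (top-down induction, conditioning on the queried literal, the parity argument showing that exactly one component of $G_k - e$ carries odd charge, and minimality to discard the satisfiable component), but the crucial step is circular. You justify bypassing a node that queries a variable $x_{e'}$ of the satisfiable component $H$ by asserting that ``the output always lies in the unsatisfiable component.'' That is not a free fact: a correct 1-BP may perfectly well route some assignments to a sink labelled by a vertex of $H$, as long as those particular assignments happen to violate that vertex's constraint --- correctness of $\searchvx$ only demands \emph{some} violated vertex, not one in the odd-charge component. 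Moreover, ``no output in $V(H)$'' is essentially equivalent to ``no query in $E(H)$'' (every root-to-sink path ending at a sink $v$ must query all of $E(v)$, else flipping an unqueried incident edge preserves the path but flips the parity of $\chi_v$), so you are assuming the conclusion of the bypass argument in order to run it. The bypass itself is also more delicate than ``the two children compute the same function'': redirecting all in-edges of the offending node to, say, its 0-child preserves correctness precisely when that child never outputs an endpoint of $e'$, and ruling that out again presupposes control over which vertices can be output below it. Breaking this circularity (e.g.\ by a simultaneous induction on queried variables and reachable sinks, or by an extremal choice of the offending node) is the actual content of the cited lemma.

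There is a second gap: well-structuredness assigns a \emph{single} pair $(G_k,c_k)$ to each node and ties it to the pair of every predecessor via condition~(3). Since a node may have several parents, your top-down induction must show that all paths into a node induce the same unsatisfiable component with the same charge; otherwise the pair is not well defined, and condition~(3) could hold for one parent while failing for another. You acknowledge the merging of paths in your last paragraph but only assert, rather than prove, that all incoming paths ``leave the satisfiable component entirely free.'' (Relatedly, the claim that $u_{k_i}$ ``is only reached by assignments fixing $x_e$ to $i$'' is false when $u_{k_i}$ has parents other than $u_k$; what read-once-ness actually gives is that $x_e$ is not re-queried below $u_{k_i}$.) In \cite{ItsyksonRSS19} this path-independence is exactly the separate proposition proved before the main lemma, and your proof needs an analogue of it.
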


\subsection{Constructing DNNF from Well-structured branching programs}\label{sct:unsattosat}

Similarly to Theorem 14 in~\cite{ItsyksonRSS19}, we give a reduction from a well-structured 1-BP for $\searchvx(G,c)$ to a DNNF computing a \emph{satisfiable} formula $T(G,c^*)$.

\begin{lemma}\label{lemma:from_well_struct_1BP_to_DNNF}
Let $G$ be a connected graph. Let $T(G,c^*)$ and $T(G,c)$ be Tseitin-formulas where $T(G,c^*)$ is satisfiable and $T(G,c)$ unsatisfiable. For every well-structured 1-BP $B$ computing $\searchvx(G,c)$ there exists a DNNF of size $O(|B| \times |V(G)|)$ computing $T(G,c^*)$.
\end{lemma}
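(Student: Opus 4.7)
The plan is to build the DNNF by induction on the structure of the 1-BP $B$. By Lemma~\ref{lemma:reduction_to_T(G,0)}, it suffices to construct a DNNF of size $O(|B| \times |V(G)|)$ for any satisfiable Tseitin-formula on $G$, so I will target $T(G, c + 1_{v_0})$ for an arbitrary fixed $v_0 \in V(G)$, which is satisfiable by Proposition~\ref{proposition:satisfiability_tseitin_formula} because $T(G,c)$ is unsatisfiable and $G$ is connected. More generally, for every node $u_k$ of $B$, to which well-structuredness associates an unsatisfiable $T(G_k, c_k)$, and every vertex $v \in V(G_k)$, I will build a DNNF gate $S_v^k$ computing the satisfiable formula $T(G_k, c_k + 1_v)$. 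The output of the DNNF is then the gate $S_{v_0}^k$ at the source of $B$.

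For the base case, every sink labeled by $v_0$ has $G_k = (\{v_0\}, \emptyset)$ and $c_k = 1_{v_0}$, so $T(G_k, c_k + 1_{v_0})$ is the trivially true empty parity constraint; I define $S_{v_0}^k$ to be the constant $1$. At a decision node $u_k$ for $x_e$ with $e = ab$ and successors $u_{k_0}, u_{k_1}$, I use the conditioning identity
\[ T(G_k, c_k + 1_v) \equiv (\overline{x_e} \land T(G_k - e, c_k + 1_v)) \lor (x_e \land T(G_k - e, c_k + 1_v + 1_a + 1_b)) \]
and split on whether $e$ is a bridge of $G_k$. If $e$ is not a bridge, both children work on the connected graph $G_k - e$, and I set $S_v^k := (\overline{x_e} \land S_v^{k_0}) \lor (x_e \land S_v^{k_1})$. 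If $e$ is a bridge, $G_k - e$ has components $G_k^a, G_k^b$; well-structuredness forces each child to handle exactly one of them, say $u_{k_0}$ handles $G_k^a$ and $u_{k_1}$ handles $G_k^b$. A short parity computation shows that for $v \in V(G_k^a)$ the $x_e = 1$ term vanishes, while the $x_e = 0$ term factorises as $T(G_k^a, c_{k_0} + 1_v) \land T(G_k^b, c_k|_{V(G_k^b)})$; the first factor is $S_v^{k_0}$ by definition, and the second equals $S_b^{k_1}$ because $c_{k_1} = c_k|_{V(G_k^b)} + 1_b$, so flipping $b$ cancels. This gives $S_v^k := \overline{x_e} \land S_v^{k_0} \land S_b^{k_1}$, with a symmetric definition for $v \in V(G_k^b)$.

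Decomposability follows from the inductive invariant $\var(S_v^k) = E(G_k)$: each new $\land$ partitions its variables as $\{e\} \sqcup E(G_k - e)$ in the non-bridge case or $\{e\} \sqcup E(G_k^a) \sqcup E(G_k^b)$ in the bridge case, both disjoint. Each node of $B$ contributes $O(|V(G_k)|) \le O(|V(G)|)$ new gates, giving total size $O(|B| \times |V(G)|)$. The main subtlety I expect is the bridge case: the leftover factor $T(G_k^b, c_k|_{V(G_k^b)})$ is not directly the formula associated by well-structuredness to $u_{k_1}$, and the whole construction is set up so that this factor can be recognised as $S_b^{k_1}$ by flipping the bridge endpoint $b$. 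This is precisely what forces us to index the family of gates by \emph{every} vertex of $G_k$ rather than a single gate per node of $B$, and is what ultimately accounts for the $|V(G)|$ blow-up in the bound.
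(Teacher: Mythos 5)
Your proposal is correct and follows essentially the same route as the paper's proof: the same inductive invariant (for each node $u_k$ of $B$ and each $v \in V(G_k)$, a shared gate computing the satisfiable formula $T(G_k, c_k + 1_v)$), the same bridge/non-bridge case split with the same parity argument showing the unsatisfiable-component term vanishes and the leftover factor is recovered by flipping the charge at the bridge endpoint, and the same final appeal to Lemma~\ref{lemma:reduction_to_T(G,0)}. The only cosmetic difference is that you phrase the base case only for sinks labelled $v_0$, but since a sink for $v$ has $V(G_k)=\{v\}$ the single required gate there is the constant $1$ in either formulation.
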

\begin{proof}
Let $S = |B|$ and denote by $u_1,\dots,u_S$ the nodes of $B$ such that if $u_j$ is a successor of $u_i$, then $j < i$ (thus $u_S$ is the source of $B$). For every $i \in [S]$, the node $u_i$ computes $\searchvx(G_i,c_i)$. We will show how to iteratively construct DNNF $D_1, \dots, D_S$ such that, $D_1 \subseteq D_2 \subseteq \dots \subseteq D_S$ and, for every $i \in [ S ]$,
\begin{center}
\quad\quad  for all $v \in V(G_i)$, there is a gate $g_v$ in $D_i$ computing $T(G_i,c_i + 1_v)$. \quad  $(\ast)$
\end{center}
Observe that, since $T(G_i,c_i)$ is unsatisfiable, $T(G_i,c_i + 1_v)$ is satisfiable for any $v \in V(G_i)$. 
We show by induction on $i$ how to construct  $D_i$ by extending $D_{i-1}$ while respecting $(\ast)$. 

For the base case, $u_1$ is a sink of $B$, so it computes $\searchvx(G_v,1_v)$ where $G_v := (\{v\},\emptyset)$ for a vertex $v \in V(G)$. Thus we define $D_1$ as a single constant-1-node which indeed computes $T(G_v,1_v+1_v) = T(G_v,0)$. So $D_1$ is a DNNF respecting $(\ast)$.

Now for the inductive case, suppose we have the DNNF $D_{k-1}$ satisfying~$(\ast)$. Consider the node $u_k$ of $B$. If $u_k$ is a sink of $B$, then we argue as for $D_1$ but since we already have the constant-1-node in $D_{k-1}$ we define $D_k := D_{k-1}$. 

Now assume that $u_k$ is a decision node for the variable $x_e$ with 0- and 1-successors $u_{k_0}$ and $u_{k_1}$. Recall that $u_k$ computes $\searchvx(G_k,c_k)$ and let $e = ab$. There are two cases. If $e$ is not a bridge in $G_k$ then $G^a_k = G^b_k = G_k - e$ and, by well-structuredness,
\begin{itemize}
\item $u_{k_0}$ computes $\searchvx(G_k - e,c_k)$
\item $u_{k_1}$ computes $\searchvx(G_k - e,c_k + 1_a + 1_b)$
\end{itemize}
For every $v \in V(G_k)$, since $k_0, k_1 < k$, by induction there is a gate $g^0_v$ in $D_{k_0}$ computing $T(G_k - e, c_k + 1_v)$ and a gate $g^1_v$ in $D_{k_1}$ computing $T(G_k - e, c_k + 1_a + 1_b + 1_v)$. So for every $v \in V(G_k)$ we add to $D_{k-1}$ an $\lor$-gate $g_v$ whose left input is $\overline{x_e} \land g^0_v$ and whose right input is $x_e \land g^1_v$. By construction, $g_v$ computes $T(G_k,c_k+1_v)$ and the new $\land$-gates are decomposable since $e$ is not an edge of $G_k - e$ and therefore $x_e$ and $\overline{x_e}$ do not appear in $D_{k_0}$ and $D_{k_1}$.

Now if $e = ab$ is a bridge in $G_k$, by well-structuredness, there exist $i \in \{0,1\}$ and $\ell_e \in \{\overline{x_e},x_e\}$ such that 
\begin{itemize}
\item $u_{k_i}$ computes $\searchvx(G^a_k,\gamma^a_k(\ell_e))$ 
\item $u_{k_{1-i}}$ computes $\searchvx(G^b_k,\gamma^b_k(\overline{\ell_e}))$
\end{itemize}
We construct a gate $g_v$ computing $T(G_k,c_k + 1_v)$ for each $v \in V(G_k)$. Assume, without loss of generality, that $v \in V(G^a_k)$, then 
\begin{itemize}
\item $T(G_k,c_k + 1_v) | \overline{\ell_e} \equiv T(G^a_k, \gamma^a_k(\overline{\ell_e}) + 1_v) \land T(G^b_k,\gamma^b_k(\overline{\ell_e})) \equiv 0$ \\ (because of the second conjunct which is known to be unsatisfiable), and
\item $T(G_k,c_k + 1_v) | \ell_e \equiv T(G^a_k, \gamma^a_k(\ell_e) + 1_v) \land T(G^b_k,\gamma^b_k(\ell_e))$
\end{itemize}
For the second item, since $k_0,k_1 < k$, by induction there is a gate $g^i_v$ in $D_{k_i}$ computing $T(G^a_k, \gamma^a_k(\ell_e) + 1_v)$ and there is a gate $g^{1-i}_b$ in $D_{k_{1-i}}$ computing  $T(G^b_k,\gamma^b_k(\overline{\ell_e}) + 1_b)$. But $\gamma_k(\ell_e)= \gamma_k(\overline{\ell_e}) + 1_a + 1_b \mod 2$, so $\gamma^b_k(\ell_e) =  \gamma^b_k(\overline{\ell_e}) + 1_b \mod 2$, therefore $g^{i-1}_b$ computes the formula $ T(G^b_k,\gamma^b_k(\ell_e))$. So we add an $\land$-gate $g_v$ whose left input is $\ell_e$ and whose right input is $s^i_v \land s^{1-i}_b$ and add it to $D_{k-1}$. Note that $\land$-gates are decomposable since $G^a_k$ and $G^b_k$ share no edge and therefore $D_{k_0}$ and $D_{k_1}$ are on disjoint sets of variables. 

Let $D_k$ be the circuit after all $g_v$ have been added to $D_{k-1}$. It is a DNNF satisfying both $D_{k-1} \subseteq D_k$ and $(\ast)$. 

It only remains to bound $|D_S|$. To this end, observe that when constructing $D_k$ from $D_{k-1}$ we add at most $3\times |V_k|$ gates, so $|D_S|$ is at most $3(|V_1|+ \dots + |V_S|) = O(S \times |V(G)|)$. Finally, take any root of $D_S$ and delete all gates not reached from it, the resulting circuit is a DNNF $D$ computing a satisfiable Tseitin formula $T(G,c')$. We get a DNNF computing $T(G,c^*)$ using Lemma~\ref{lemma:reduction_to_T(G,0)}.\qed
\end{proof}

Combining Corollary~\ref{corollary:1BP_size_searchvx}, Lemma~\ref{lemma:1BP_are_well_structured} and Lemma~\ref{lemma:from_well_struct_1BP_to_DNNF} yields Theorem~\ref{theorem:from_refutation_to_sat_DNNF}.

\section{Adversarial Rectangle Bounds}

In this section, we introduce the game we will use to show DNNF lower bounds for Tseitin formulas.
It is based on combinatorial rectangles, a basic object of study from communication complexity.

\begin{definition} 
A \emph{(combinatorial) rectangle} for a variable partition $(X_1,X_2)$ of a variables set $X$ is defined to be a set of assignments of the form $R = A \times B$ where $A \subseteq \{0,1\}^{X_1}$ and $B\subseteq \{0,1\} ^{X_2}$. The rectangle is called balanced when $\frac{|X|}{3} \leq |X_1|, |X_2| \leq \frac{2|X|}{3}$.
\end{definition}

A rectangle on variables $X$ may be seen as a function whose satisfying assignments are exactly the $a \cup b$ for $a \in A$ and $b \in B$, so we sometimes interpret rectangles as Boolean functions whenever it is convenient.

\begin{definition} 
Let $f$ be a Boolean function. A \emph{balanced rectangle cover} of $f$ is a collection $\mathcal{R} = \{R_1,\dots,R_K\}$ of balanced rectangles on $\var(f)$, possibly for different partitions of $\var(f)$, such that $f$ is equivalent to $\bigvee_{i = 1}^K R_i$.
The minimum number of rectangles in a balanced cover of $f$ is denoted by~$R(f)$.
\end{definition}

\begin{theorem}\textup{\cite{BovaCMS16}}\label{thm:bovaetal}
Let $D$ be a DNNF computing a function $f$, then $R(f) \leq \vert D \vert$.
\end{theorem}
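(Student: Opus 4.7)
The plan is to extract from $D$ a balanced rectangle cover of $f$ with at most $|D|$ rectangles, by assigning to every model $m \in \sat(f)$ a specific gate $\phi(m)$ whose variable set induces a balanced partition of $\var(f)$, and then bundling the models that are sent to the same gate into a single rectangle. First I would make $D$ complete; the preliminaries note this is possible in polynomial time, and from now on I treat smoothness as part of $D$. Write $n := |\var(f)|$ and, for each gate $g$, let $\var(g)$ denote the variables appearing on leaves below $g$, which by smoothness coincides with the variable set of the function computed at $g$. Since $D$ computes $f$, the disjunction over the conjunctions of literals of all proof trees of $D$ equals $f$, so every model $m$ is consistent with at least one proof tree; fix such a $P_m$ for each $m$.

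The map $\phi$ is defined by a descent through $P_m$: starting at the root, at each $\lor$-gate follow the unique child present in the proof tree, at each $\land$-gate follow the child with the larger $|\var(\cdot)|$ (ties broken arbitrarily), and stop at the first gate $g$ with $|\var(g)| \leq 2n/3$, setting $\phi(m) := g$. By smoothness, $\lor$-gate transitions preserve the variable set, so $|\var(\cdot)|$ strictly decreases only at $\land$-gates on the walk. If $g$ is not the root, its walk-predecessor $g'$ is therefore an $\land$-gate with $|\var(g')| > 2n/3$, and the ``larger child'' rule forces $|\var(g)| \geq |\var(g')|/2 > n/3$. In either case the partition $(\var(g), \var(f)\setminus\var(g))$ is balanced.

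Given $\phi$, for each gate $g$ in its image I set $A_g := \{m|_{\var(g)} : \phi(m) = g\}$, $B_g := \{m|_{\var(f)\setminus\var(g)} : \phi(m) = g\}$ and $R_g := A_g \times B_g$; these are rectangles on balanced partitions, and the number of non-empty ones is at most $|D|$. By construction $\sat(f) \subseteq \bigcup_g R_g$ since $m \in R_{\phi(m)}$, so it remains to prove $R_g \subseteq \sat(f)$. For this, given $a \in A_g$ and $b \in B_g$ certified by models $m, m' \in \phi^{-1}(g)$, I would form a new proof tree $P''$ by taking $P_{m'}$ and replacing its sub-tree rooted at the walk-occurrence of $g$ with the sub-tree rooted at the walk-occurrence of $g$ in $P_m$, then argue that $P''$ is a valid proof tree of $D$ whose conjunction of literals is exactly $a \cup b$, which certifies $a \cup b \in \sat(f)$.

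The crux, and the step carrying all the technical weight, is the last claim: the literals of $P_{m'}$ lying outside the sub-tree rooted at $g$ must mention only variables in $\var(f) \setminus \var(g)$, so that the paste-in produces no contradictory literals and yields precisely $a \cup b$. This reduces to an inductive invariant along the walk from the root to $g$ in $P_{m'}$: at each $\land$-gate on the walk, decomposability makes the two children's variable sets disjoint, and the walk descends into the child whose variable set contains $\var(g)$, so the sub-tree hanging off the non-walk sibling has variable set disjoint from $\var(g)$; aggregating these siblings across the walk, and using completeness to ensure that the root's proof tree accounts for all of $\var(f)$, shows that the complement literals occupy exactly $\var(f) \setminus \var(g)$ and are consistent with $b$. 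Together with the inserted sub-tree being consistent with $a$, this gives the rectangle property $R_g \subseteq \sat(f)$ and hence $R(f) \leq |D|$.
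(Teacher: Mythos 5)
Your proof is correct, and it is essentially the standard argument for this result (the paper itself cites \cite{BovaCMS16} and gives no proof of this statement). The machinery you use — fixing a proof tree per model, observing that the assignments routed through a fixed gate $g$ form a rectangle on $(\var(g),\var(f)\setminus\var(g))$ via the cut-and-paste of subtrees, justified by decomposability of the $\land$-gates on the root-to-$g$ path and smoothness — is exactly the machinery the paper deploys for its refinement, Theorem~\ref{thm:DNNFlower}, where the set $\sat(D,v)$ plays the role of your $R_g \subseteq \sat(D,g)$. What you add, and what Theorem~\ref{thm:DNNFlower} does not need because Adam chooses the cut, is the balancing descent (follow the heavier child of each $\land$-gate until $|\var(g)|\le 2n/3$, so that $n/3 < |\var(g)| \le 2n/3$); that step is right. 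The only point stated without justification is that the grafted object is again a proof tree and that $g$ occurs only once in each $P_m$ — the latter follows since two occurrences would have to diverge at an $\land$-gate, contradicting decomposability as $\var(g)\neq\emptyset$ — but this is routine and not a gap.
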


When trying to show parameterized lower bounds with Theorem~\ref{thm:bovaetal}, one often runs into the problem that it is somewhat inflexible: the partitions of the rectangles in covers have to be balanced, but in parameterized applications this is often undesirable. Instead, to show good lower bounds, one wants to be able to partition in places that allow to cut in complicated subparts of the problem. This is e.g.~the underlying technique in~\cite{Razgon16}. To make this part of the lower bound proofs more explicit and the technique more reusable, we here introduce a refinement of~Theorem~\ref{thm:bovaetal}.

We define the adversarial multi-partition rectangle cover game for a function~$f$ on variables $X$ and a set $S\subseteq \sat(f)$ to be played as follows: two players, the cover player Charlotte and her adversary Adam, construct in several rounds~a set $\mathcal{R}$ of combinatorial rectangles that cover the set $S$ respecting $f$ (that is, rectangles in $\mathcal{R}$ contain only models of $f$). The game starts with $\mathcal R$ as the empty set. Charlotte starts a round by choosing an input $a\in S$ and a v-tree $T$ of $X$. Now Adam chooses a partition $(X_1, X_2)$ of $X$ induced by $T$. Charlotte ends the round by adding to $\mathcal R$ a combinatorial rectangle for this partition and respecting~$f$ that covers $a$. The game is over when $S$ is covered by $\mathcal R$. The adversarial multi-partition rectangle complexity of $f$ and $S$, denoted by $aR(f,S)$ is the minimum number of rounds in which Charlotte can finish the game, whatever the choices of Adam are. The following theorem gives the core technique for showing lower bounds later on.

\begin{theorem}\label{thm:DNNFlower}
Let $D$ be a complete DNNF computing a function $f$ and let $S\subseteq \sat(f)$. Then $aR(f,S)\le |D|$. 
\end{theorem}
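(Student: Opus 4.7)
The plan is to associate a canonical rectangle $R_g$ to every gate $g$ of $D$ and have Charlotte's strategy always respond with the $R_g$ singled out by Adam's choice inside Charlotte's v-tree; the argument then reduces to showing that each round consumes a previously-unused gate of $D$, so at most $|D|$ rounds suffice.

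Concretely, for every gate $g$ of $D$, set $A_g = \sat(g) \subseteq \{0,1\}^{\var(g)}$ and $B_g = \{a_2 \in \{0,1\}^{X \setminus \var(g)} : (a_1, a_2) \in \sat(f) \text{ for every } a_1 \in A_g\}$, and define $R_g = A_g \times B_g$. By construction $R_g$ is a rectangle for the partition $(\var(g), X \setminus \var(g))$ and is contained in $\sat(f)$. The key step is a swap-lemma: for every $a \in \sat(f)$ and every proof tree $\pi$ of $D$ encoding $a$, every gate $g$ appearing in $\pi$ satisfies $a \in R_g$. The containment $a|_{\var(g)} \in A_g$ is immediate, since the sub-proof of $g$ inside $\pi$ is itself a proof tree of $g$ encoding $a|_{\var(g)}$. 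For $a|_{X \setminus \var(g)} \in B_g$, pick any $a'_1 \in \sat(g)$ together with a proof tree $\pi'_g$ of $g$ encoding $a'_1$ and substitute $\pi'_g$ for the sub-proof of $g$ inside $\pi$; decomposability at the $\land$-ancestors of $g$ keeps the variables of $\pi'_g$ disjoint from those handled above $g$, so the result is a legal proof tree of $D$, and completeness forces it to assign every variable exactly once, thereby encoding the combined model $(a'_1, a|_{X \setminus \var(g)}) \in \sat(f)$.

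With the swap-lemma in hand Charlotte's strategy is short. For each $a \in S$ fix once and for all a proof tree $\pi_a$ of $D$ encoding $a$ (available because $D$ is complete and $a \in \sat(f)$), and let $T_{\pi_a}$ be the v-tree on $X$ obtained by contracting the unary $\lor$-nodes of $\pi_a$; its internal nodes are exactly the $\land$-gates of $\pi_a$ and its leaves are the literal leaves of $\pi_a$ relabelled by their variables. Each round Charlotte picks an uncovered $a \in S$ and plays $(a, T_{\pi_a})$; whichever vertex $v$ Adam chooses in $T_{\pi_a}$ corresponds to some gate $g$ of $D$ appearing in $\pi_a$ (a literal leaf being the degenerate case $\var(g) = \{x\}$), and the induced partition is exactly $(\var(g), X \setminus \var(g))$, so Charlotte answers with $R_g$. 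By the swap-lemma $a \in R_g$, so the round is legal. Moreover, if $R_g$ were already present in $\mathcal{R}$ from a previous round then, since $g \in \pi_a$, the swap-lemma would already give $a \in R_g$, contradicting that Charlotte picked $a$ as uncovered. Hence each round consumes a fresh gate of $D$, so the game terminates within $|D|$ rounds.

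The main obstacle is the swap-lemma and the proof-tree surgery it relies on: one must argue that substituting $\pi'_g$ for the sub-proof of $g$ yields a legal proof tree whose encoded model differs from the original only on $\var(g)$, which needs decomposability (to rule out variable collisions with what sits above $g$) together with completeness (so that $\pi_a$ actually assigns every variable and $T_{\pi_a}$ is indeed a v-tree of $X$ whose vertex-induced partitions match the partitions of the $R_g$'s). Once this lemma and the match between Adam's v-tree partition and $(\var(g), X \setminus \var(g))$ are in place, the termination argument is just a bookkeeping of distinct gates used across rounds.
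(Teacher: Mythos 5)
Your proof is correct and follows essentially the same strategy as the paper: Charlotte plays the v-tree obtained from a proof tree of an uncovered model, Adam's choice singles out a gate $g$, Charlotte answers with a canonical rectangle attached to $g$, and distinct rounds consume distinct gates. The only (harmless) difference is that you use the maximal rectangle $\sat(g)\times B_g$ where the paper uses the set of models accepted by proof trees through $g$; your explicit swap-lemma is exactly the decomposability/completeness argument the paper leaves implicit in the claim that this set is a rectangle.
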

\begin{proof}
Let $X = \var(D)$. We iteratively delete vertices from $D$ and construct rectangles. The approach is as follows: Charlotte chooses an assignment~$a\in S$ not yet in any rectangle she constructed before and a proof tree $T$ accepting~$a$ in~$D$. By completeness of $D$, all variables of $X$ appear exactly once in $T$. Charlotte constructs a v-tree of $X$ from $T$ by deleting negations on the leaves, contracting away nodes with a single child and forgetting the labels of all operation gates. Now Adam chooses a partition induced by $T$ given by a subtree of $T$ with root~$v$. Note that $v$ is a gate of $C$. Let $\sat(D,v) \subseteq \sat(f)$ be the assignments to~$X$ accepted by a proof tree of $C$ passing through $v$, and observe that $\sat(D,v)$ is  a combinatorial rectangle $A \times B$ with $A \subseteq \{0,1\}^{\var(v)}$ and $B \subseteq \{0,1\}^{X \setminus \var(v)}$. Charlotte chooses the rectangle $\sat(D,v)$, deletes it from $S$ and the game continues.
 
 Note that the vertex $v$ in the above construction is different for every iteration of the game: by construction, Charlotte never chooses an assignment $a$ that is in any set $\sat(D,v)$ for a vertex $v$ that has appeared before. Thus, no such $v$ can appear in the proof tree of the chosen $a$. Consequently, a new vertex $v$ is chosen for each assignment $a$ that Charlotte chooses and thus the game will never last more than $|D|$ rounds. \qed
\end{proof}



\section{Splitting Parity Constraints}

In this section, we will see that rectangles \emph{split} parity constraints in a certain sense and show how this is reflected in in the underlying graph of Tseitin-formulas. This will be crucial in proving the DNNF lower bound in the next section with the adversarial multi-partition rectangle cover game.

\subsection{Rectangles Induce Sub-Constraints for Tseitin-Formulas}

Let $R$ be a rectangle for the partition $(E_1,E_2)$ of $E(G)$ such that $R \subseteq \sat(T(G,c))$. Assume that there is a vertex $v$ of $G$ incident to edges in $E_1$ and to edges in $E_2$, i.e., $E(v) = E_1(v) \cup E_2(v)$ where neither $E_1(v)$ not $E_2(v)$ is empty. We will show that $R$ does not only respect $\chi_v$, but it also respects a sub-constraint of~$\chi_v$.

\begin{definition}
Let $\chi_v$ be a parity constraint on $(x_e)_{e \in E(v)}$. A sub-constraint of~$\chi_v$ is a parity constraint $\chi'_v$ on a non-empty proper subset of the variables of $\chi_v$.
\end{definition}

\begin{lemma}\label{lemma:rectangle_sub_constraints}
Let $T(G,c)$ be a satisfiable Tseitin-formula and let $R$ be a rectangle for the partition $(E_1,E_2)$ of $E(G)$ such that $R \subseteq \sat(T(G,c))$. If $v \in V(G)$ is incident to edges in $E_1$ and to edges in $E_2$, then there exists a sub-constraint $\chi'_v$ of $\chi_v$ such that $R \subseteq \sat(T(G,c) \land \chi'_v)$.
\end{lemma}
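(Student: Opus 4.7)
\medskip

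\noindent\textbf{Proof plan.} The plan is to split the parity constraint $\chi_v$ along the partition $(E_1, E_2)$ and show that one of the two halves is already forced to take a fixed value by the rectangle structure. Write $R = A\times B$ with $A\subseteq\{0,1\}^{E_1}$ and $B\subseteq\{0,1\}^{E_2}$, and assume $R$ is non-empty (otherwise the inclusion is vacuous and one may take any sub-constraint). By hypothesis, both $E_1(v)$ and $E_2(v)$ are non-empty, so we may define, for $a\in A$ and $b\in B$,
\[
\alpha(a) \;=\; \sum_{e\in E_1(v)} a(x_e) \bmod 2, \qquad \beta(b) \;=\; \sum_{e\in E_2(v)} b(x_e) \bmod 2.
\]

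Since every assignment in $R$ satisfies $T(G,c)$ and in particular $\chi_v$, we have $\alpha(a) + \beta(b) \equiv c(v) \pmod 2$ for every $(a,b)\in A\times B$. Fixing any $b_0\in B$, this yields $\alpha(a) \equiv c(v) - \beta(b_0) \pmod 2$ for every $a\in A$, so $\alpha$ is constant on $A$; call this constant $\alpha^{\ast}$. Now set $\chi'_v$ to be the parity constraint
\[
\sum_{e\in E_1(v)} x_e \;\equiv\; \alpha^{\ast} \pmod 2.
\]
This is a parity constraint on $E_1(v)$, a non-empty proper subset of $E(v)$ (proper because $E_2(v)\neq\emptyset$, non-empty by the choice of $v$), hence a sub-constraint of $\chi_v$ in the sense of the definition just given. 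By construction, every $(a,b)\in R$ satisfies $\chi'_v$, so $R\subseteq\sat(\chi'_v)$; combined with the already assumed $R\subseteq\sat(T(G,c))$, this gives $R\subseteq\sat(T(G,c)\land\chi'_v)$, which is the claim.

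The argument is a short calculation, so there is essentially no serious obstacle. The only subtle points to be careful about are (i) justifying that the rectangle is non-empty, which one can do by remarking the lemma is vacuous otherwise, and (ii) verifying that $E_1(v)$ is a \emph{proper} non-empty subset of $E(v)$, which follows directly from the assumption that $v$ is incident to edges of both parts of the partition.
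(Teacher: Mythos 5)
Your proof is correct and follows essentially the same route as the paper's: both exploit the rectangle structure to show that the parity of the restriction to $E_1(v)$ is constant across $R$ (you fix $b_0\in B$ and argue directly, the paper argues by contradiction with a shared second component), and then take $\chi'_v$ to be the parity constraint on $E_1(v)$ with that constant charge. Your explicit handling of the empty-rectangle case and the check that $E_1(v)$ is a non-empty proper subset of $E(v)$ are fine additions but not substantive differences.
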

\begin{proof}
Let $a_1 \cup a_2 \in R$ where $a_1$ is an assignment to $E_1$ and $a_2$ an assignment to~$E_2$. Let $a_1(v)$ and $a_2(v)$ denote the restriction of $a_1$ and $a_2$ to $E_1(v)$ and $E_2(v)$, respectively. We claim that for all $a'_1 \cup a'_2 \in R$, we have that $a'_1(v)$ and $a_1(v)$ have the same parity, that is, $a_1(v)$ assigns an odd number of variables of $E_1(v)$ to 1 if and only if it is also the case for $a'_1(v)$. Indeed if $a_1(v)$ and $a'_1(v)$ have different parities, then so do $a_1(v) \cup a_2(v)$ and $a'_1(v) \cup a_2(v)$. So either $a_1 \cup a_2$ or $a'_1 \cup a_2$ falsifies $\chi_v$, but both assignments are in $R$, so $a_1(v)$ and $a'_1(v)$ cannot have different parities as this contradicts $R \subseteq \sat(T(G,c))$. Let $c_1$ be the parity of $a_1(v)$, then we have that assignments in $R$ must satisfy $\chi'_v : \sum_{e \in E_1(v)} x_e = c_1 \mod 2$, so $R \subseteq \sat(T(G,c) \land \chi'_v)$. \qed
\end{proof}

Renaming $\chi'_v$ as $\chi^1_v$ and adopting notations from the proof, one sees that $\chi^1_v \land \chi_v \equiv \chi^1_v \land \chi^2_v$ where $\chi^2_v : \sum_{e \in E_2(v)} x_e = c(v) + c_1 \mod 2$. So $R$ respects the formula $(T(G,c) - \chi_v) \land \chi^1_v \land \chi^2_v$ where $(T(G,c) - \chi_v)$ is the formula obtained by removing all clauses of $\chi_v$ from $T(G,c)$. In this sense, the rectangle is splitting the constraint $\chi_v$ into two subconstraints in disjoint variables. Since $\chi_v \equiv (\chi^1_v \land \chi^2_v) \lor (\overline{\chi}^1_v \land \overline{\chi}^2_v)$ it is plausible that potentially many models of $\chi_v$ are not in~$R$. We show that this is true in the next section.

\subsection{Vertex Splitting and Sub-constraints for Tseitin-Formulas}

Let $v \in V(G)$ and let $(N_1, N_2)$ be a proper partition of $N(v)$, that is, neither~$N_1$ nor $N_2$ is empty. The graph $G'$ we get by \emph{splitting} $v$ along $(N_1, N_2)$ is defined as the graph we get by deleting $v$, adding two vertices~$v^1$ and~$v^2$, and connecting~$v^1$ to all vertices in $N_1$ and $v^2$ to all vertices in $N_2$. We now show that splitting a vertex $v$ in a graph $G$ has the same effect as adding a sub-constraint of $\chi_v$.

\begin{lemma}\label{lemma:graph_splitting_equals_subconstraint}
Let $T(G,c)$ be a Tseitin-formula. Let $v \in V(G)$ and let $(N_1, N_2)$ be a proper partition of~$N(v)$. Let $c_1$ and $c_2$ be such that $c_1 + c_2 = c(v) \mod 2$ and let $\chi^i_v : \sum_{u \in N_i} x_{uv} = c_i \mod 2$ for $i \in \{1,2\}$ be sub-constraints of $\chi_v$. Call~$G'$ the result of splitting~$v$ along $(N_1, N_2)$ and set
 \begin{align*}
  c'(u) := \begin{cases}
            c(u), &\text{ if } u \in V(G) \setminus \{v\}\\
            c_i, &\text{ if } u = v^i, i \in \{1, 2\}
           \end{cases}
 \end{align*}
There is a bijection $\rho : \var(T(G,c)) \rightarrow \var(T(G',c'))$ acting as a renaming of the variables such that $T(G',c') \equiv (T(G,c) \land \chi^1_v) \circ \rho$.
\end{lemma}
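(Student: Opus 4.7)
The plan is to exhibit $\rho$ as the renaming induced by the natural bijection between $E(G)$ and $E(G')$, and then verify that the constraints of $T(G',c')$, read back in the original variables, are exactly those of $T(G,c)\wedge\chi^1_v$.

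First I would set up the bijection. The edges of $G'$ coincide with those of $G$ on the edges not incident to $v$; each edge $vu$ of $G$ is replaced by $v^iu$ in $G'$ where $i\in\{1,2\}$ is the unique index with $u\in N_i$. Thus I define $\rho(x_e)=x_e$ for $e\in E(G)\setminus E(v)$ and $\rho(x_{vu})=x_{v^iu}$ for $u\in N_i$. This is clearly a bijection between $\var(T(G,c))$ and $\var(T(G',c'))$.

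Next I would compare the constraints vertex by vertex. For $u\in V(G)\setminus\{v\}$, the parity constraint $\chi_u$ of $T(G,c)$ reads $\sum_{w\in N(u)}x_{uw}=c(u)\bmod 2$. In $G'$ the neighborhood of $u$ is identical except that, if $u$ was a neighbor of $v$, the edge $uv$ is replaced by $uv^i$ for the unique $i$ with $u\in N_i$; since $c'(u)=c(u)$, applying $\rho$ to $\chi_u$ yields exactly the constraint of $u$ in $T(G',c')$. For the new vertices, the constraint at $v^i$ in $T(G',c')$ is $\sum_{u\in N_i}x_{v^iu}=c_i\bmod 2$, which under $\rho^{-1}$ becomes precisely $\chi^i_v$. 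Consequently
\[
T(G',c')\circ\rho^{-1}\ \equiv\ \Bigl(\bigwedge_{u\in V(G)\setminus\{v\}}\chi_u\Bigr)\wedge\chi^1_v\wedge\chi^2_v.
\]

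Finally I would use the algebraic identity $c_1+c_2=c(v)\bmod 2$: adding $\chi^1_v$ and $\chi^2_v$ modulo $2$ gives $\sum_{u\in N(v)}x_{vu}=c(v)\bmod 2$, which is $\chi_v$. Hence $\chi^1_v\wedge\chi^2_v\equiv\chi^1_v\wedge\chi_v$, and substituting this into the display above yields $T(G',c')\circ\rho^{-1}\equiv T(G,c)\wedge\chi^1_v$, i.e.\ $T(G',c')\equiv(T(G,c)\wedge\chi^1_v)\circ\rho$, as required.

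There is no real obstacle here; the statement is essentially a bookkeeping observation. The one place to be careful is the case analysis in defining $\rho$ at an edge $uv$ with $u$ a neighbor of $v$, making sure the same index $i$ is used when redefining the endpoint at $v$ and when checking the constraint at $u$ — but this is forced by the partition $(N_1,N_2)$ being a partition of $N(v)$.
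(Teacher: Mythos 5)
Your proof is correct and follows essentially the same route as the paper's: the same renaming $\rho$ (sending $x_{uv}$ to $x_{uv^i}$ for $u\in N_i$ and fixing all other variables) and the same key identity $\chi^1_v\wedge\chi^2_v\equiv\chi^1_v\wedge\chi_v$ derived from $c_1+c_2=c(v)\bmod 2$. The only difference is that you spell out the vertex-by-vertex verification slightly more explicitly than the paper does.
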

\begin{proof}
Denote by $T(G,c) - \chi_v$ the formula equivalent to the conjunction of all~$\chi_u$ for $u \in V(G) \setminus \{v\}$. Then $T(G,c) \land \chi^1_v \equiv (T(G,c) - \chi_v) \land \chi^1_v \land \chi^2_v$. The constraints $\chi_u$ for $u \in V(G) \setminus \{v\}$ appear in both $T(G',c')$ and in $T(G,c) - \chi_v$ and the sub-constraints $\chi^1_v$ and $\chi^2_v$ are exactly the constraints for $v^1$ and $v^2$ in $T(G',c')$ modulo the variable renaming $\rho$ defined by $\rho (x_{uv}) = x_{uv^1}$ when $u \in N_1$, $\rho (x_{uv}) = x_{uv^2}$  when $u \in N_2$, and $\rho (x_e) = x_e$ when $v$ is not incident to $e$.
\qed
\end{proof}

Intuitely, Lemma~\ref{lemma:graph_splitting_equals_subconstraint} says that splitting a vertex in $G$ and adding sub-constraint are essentially the same operation. This allows us to compute the number of models of a Tseitin-formula to which a sub-constraint was added.

\begin{lemma}\label{lemma:graph_splitting_to_connected_equals_half_models}
Let $T(G,c)$ be a satisfiable Tseitin-formula where $G$ is connected. Define $T(G',c')$ as in Lemma~\ref{lemma:graph_splitting_equals_subconstraint}. If $G'$ is connected then $T(G',c')$ has $2^{|E(G)| - |V(G)|}$ models. 
\end{lemma}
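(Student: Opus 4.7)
The plan is to apply Proposition~\ref{proposition:number_model_tseitin_formula} to $T(G',c')$, so the work splits into two parts: (i) verify that $T(G',c')$ is satisfiable, and (ii) count $|E(G')|$, $|V(G')|$ and the number of connected components of $G'$ to plug into the model-count formula.

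For (i), I would use Proposition~\ref{proposition:satisfiability_tseitin_formula}: since $G'$ is connected by hypothesis, $T(G',c')$ is satisfiable iff $\sum_{u \in V(G')} c'(u) \equiv 0 \pmod 2$. By the definition of $c'$ in Lemma~\ref{lemma:graph_splitting_equals_subconstraint}, the charges on vertices of $V(G) \setminus \{v\}$ are unchanged, and $c'(v^1) + c'(v^2) = c_1 + c_2 \equiv c(v) \pmod 2$. Therefore $\sum_{u \in V(G')} c'(u) \equiv \sum_{u \in V(G)} c(u) \pmod 2$, and the right-hand side is $0 \bmod 2$ since $T(G,c)$ is satisfiable and $G$ is connected, again by Proposition~\ref{proposition:satisfiability_tseitin_formula}.

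For (ii), the vertex-splitting operation removes $v$ and adds $v^1, v^2$, so $|V(G')| = |V(G)| + 1$. Each edge incident to $v$ in $G$ becomes an edge incident to exactly one of $v^1$ or $v^2$ in $G'$ (according to whether its other endpoint lies in $N_1$ or $N_2$), and all other edges are untouched, so $|E(G')| = |E(G)|$. Combined with $G'$ being connected ($K=1$), Proposition~\ref{proposition:number_model_tseitin_formula} gives $2^{|E(G')| - |V(G')| + 1} = 2^{|E(G)| - (|V(G)|+1) + 1} = 2^{|E(G)| - |V(G)|}$ models, as claimed.

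There is no real obstacle here; the lemma is a bookkeeping consequence of the two cited propositions. The only mildly delicate point is checking that the charge parity is preserved under the split, which follows immediately from the requirement $c_1 + c_2 \equiv c(v) \pmod 2$ imposed in Lemma~\ref{lemma:graph_splitting_equals_subconstraint}.
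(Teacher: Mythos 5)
Your proposal is correct and follows exactly the same route as the paper's proof: verify satisfiability of $T(G',c')$ via Proposition~\ref{proposition:satisfiability_tseitin_formula} using the preserved total charge, then apply Proposition~\ref{proposition:number_model_tseitin_formula} with $|E(G')| = |E(G)|$, $|V(G')| = |V(G)|+1$, and $K=1$. You merely spell out the bookkeeping that the paper leaves implicit.
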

\begin{proof}
By Proposition~\ref{proposition:satisfiability_tseitin_formula}, $T(G',c')$ is satisfiable since $T(G,c)$ is satisfiable and $\sum_{u \in V(G')} c'(u) = \sum_{u \in V(G)} c(u) = 0 \mod 2$. Using Proposition~\ref{proposition:number_model_tseitin_formula} yields that $T(G',c')$ has $2^{|E(G')| - |V(G')| + 1} = 2^{|E(G)| - |V(G)|}$ models. 
\qed
\end{proof}

\begin{lemma}\label{lemma:graph_splitting_a_lot_to_connected_equals_far_less_models}
Let $T(G,c)$ be a satisfiable Tseitin-formula where $G$ is connected. Let $\{v_1, \ldots, v_k\}$ be an independent set in $G$. For all $i \in [k]$ let $(N_1^i, N_2^i)$ be a proper partition of  $N(v_i)$ and let $\chi'_{v_i} : \sum_{u \in N^i_1} x_{uv_i} = c_i \mod 2$. If the graph obtained by splitting all $v_i$ along $(N_1^i, N_2^i)$ is connected, then the formula $T(G,c) \land \chi'_{v_1} \land \dots \land \chi'_{v_k}$ has $2^{|E(G)| - |V(G)| - k + 1}$  models.
\end{lemma}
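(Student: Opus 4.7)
\medskip

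\noindent\textbf{Proof plan for Lemma on models of $T(G,c) \land \chi'_{v_1} \land \dots \land \chi'_{v_k}$.}

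The plan is to iteratively apply Lemma~\ref{lemma:graph_splitting_equals_subconstraint} once for each $v_i$, transforming the formula $T(G,c) \land \chi'_{v_1} \land \dots \land \chi'_{v_k}$ (up to variable renaming) into a Tseitin-formula $T(G^*,c^*)$ whose underlying graph $G^*$ is exactly the result of splitting all $v_i$ along $(N_1^i, N_2^i)$. Since variable renamings preserve the number of models, it then suffices to count the models of $T(G^*,c^*)$ using Proposition~\ref{proposition:number_model_tseitin_formula}.

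More concretely, I would define a sequence $G = G_0, G_1, \dots, G_k = G^*$ and corresponding charge functions $c_0 = c, c_1, \dots, c_k$ where $G_i$ is obtained from $G_{i-1}$ by splitting $v_i$ along $(N_1^i, N_2^i)$, with new vertices $v_i^1$ and $v_i^2$ assigned charges $c_i$ and $c(v_i) - c_i \bmod 2$ respectively (all other charges unchanged). A single application of Lemma~\ref{lemma:graph_splitting_equals_subconstraint} to $v_i$ gives a bijective renaming $\rho_i$ such that $T(G_i, c_i) \equiv (T(G_{i-1}, c_{i-1}) \land \chi'_{v_i}) \circ \rho_i$. Composing these $k$ renamings yields a single bijection $\rho$ with $T(G^*,c^*) \equiv (T(G,c) \land \chi'_{v_1} \land \dots \land \chi'_{v_k}) \circ \rho$, so the two formulas have the same number of models.

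It remains to count the models of $T(G^*,c^*)$. By construction $|V(G^*)| = |V(G)| + k$ and $|E(G^*)| = |E(G)|$, and by hypothesis $G^*$ is connected. The total charge is preserved at each split (since $c_i + (c(v_i) - c_i) = c(v_i) \bmod 2$), so $\sum_{u \in V(G^*)} c^*(u) = \sum_{u \in V(G)} c(u) \equiv 0 \bmod 2$ by Proposition~\ref{proposition:satisfiability_tseitin_formula} applied to the satisfiable $T(G,c)$. Hence $T(G^*,c^*)$ is satisfiable, and Proposition~\ref{proposition:number_model_tseitin_formula} gives it exactly $2^{|E(G^*)| - |V(G^*)| + 1} = 2^{|E(G)| - |V(G)| - k + 1}$ models, as required.

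The only subtle point, and the one I would take care to justify, is that the iterative splittings are well-defined and commute appropriately. This is where the hypothesis that $\{v_1,\dots,v_k\}$ is an independent set enters: since no $v_j$ lies in $N(v_i)$, splitting $v_i$ does not alter the neighborhood of any other $v_j$, nor does it change which edges are incident to $v_j$. Thus the partition $(N_1^j, N_2^j)$ remains a proper partition of $N_{G_{i}}(v_j) = N_G(v_j)$ in every intermediate graph, so Lemma~\ref{lemma:graph_splitting_equals_subconstraint} can legitimately be applied $k$ times in sequence. No other connectivity assumption on intermediate graphs is needed because Proposition~\ref{proposition:number_model_tseitin_formula} is applied only to the final graph $G^*$, whose connectivity is given by hypothesis.
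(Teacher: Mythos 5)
Your proof is correct and follows essentially the same route as the paper: the paper's argument is a short induction on Lemma~\ref{lemma:graph_splitting_equals_subconstraint} (combined with Lemma~\ref{lemma:graph_splitting_to_connected_equals_half_models}), using the independence of $\{v_1,\dots,v_k\}$ exactly as you do to justify that the $k$ splits can be performed in sequence. The only, harmless, difference is that you invoke Proposition~\ref{proposition:number_model_tseitin_formula} once on the final graph $G^*$ rather than counting models after each split, which neatly avoids having to note that the intermediate graphs are also connected.
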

\begin{proof}
An easy induction based on Lemma~\ref{lemma:graph_splitting_equals_subconstraint} and Lemma~\ref{lemma:graph_splitting_to_connected_equals_half_models}. The induction works since, $\{v_1, \ldots, v_k\}$ being an independant set, the edges to modify by splitting $v_i$ are still in the graph where $v_1,\dots,v_{i-1}$ have been split.
\qed
\end{proof}

\subsection{Vertex Splitting in 3-Connected Graphs}

When we want to apply the results of the last sections to bound the size of rectangles, we require that the graph $G$ remains connected after splitting vertices. This is obviously not true for all choices of vertex splits, but here we will see that if $G$ is sufficiently connected, then we can always chose a large subset of any set of potential splits such that, after applying the split for this subset, $G$ remains connected. 

\begin{lemma}\label{lemma:choose_vertices_from_3-connected_graph}
Let $G$ be a $3$-connected graph of and let $\{v_1, \ldots, v_k\}$ be an independent set in $G$. For every $i \in [k]$ let $(N_1^i, N_2^i)$ be a proper partition of  $N(v_i)$. Then there is a subset $S$ of $\{v_1, \ldots, v_k\}$ of size at least $k/3$ such that the graph resulting from splitting all $v_i\in S$ along the corresponding $(N_1^i, N_2^i)$ is connected.
\end{lemma}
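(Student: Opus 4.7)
The plan is to recast the vertex-splitting question as an edge-removal question in an auxiliary graph $\tilde{G}$ and then extract the bound from a simple counting argument driven by $3$-connectedness. First I would define $\tilde{G}$ as the graph obtained from $G$ by replacing each $v_i$ with a new edge $e_i := v_i^1 v_i^2$, where $v_i^1$ receives the neighbors in $N_1^i$ and $v_i^2$ the neighbors in $N_2^i$. Set $E^* := \{e_1, \ldots, e_k\}$. The key observation is that splitting exactly the vertices in a subset $S \subseteq \{v_1, \ldots, v_k\}$ yields a connected graph if and only if $\tilde{G} \setminus \{e_i : v_i \in S\}$ is connected. So the goal reduces to removing at least $k/3$ edges of $E^*$ from $\tilde{G}$ while keeping it connected, i.e.\ to finding a spanning tree of $\tilde{G}$ using at most $2k/3$ edges of $E^*$.

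The crucial intermediate claim I need is that every edge cut of $\tilde{G}$ consisting only of edges from $E^*$ has size at least $3$. This is where $3$-connectedness of $G$ enters. I would argue by contrapositive: if $C \subseteq E^*$ has $|C| \leq 2$, then the corresponding set $\{v_i : e_i \in C\}$ is not a vertex cut of $G$, so $G - \{v_i : e_i \in C\}$ is connected, and then in $\tilde{G} \setminus C$ every split half $v_j^\ell$ attaches to that connected piece via at least one neighbor in $N_\ell^j$ (non-empty since $(N_1^j, N_2^j)$ is a proper partition). Hence $\tilde{G} \setminus C$ is connected, so $C$ was not a cut.

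Given the claim, I would consider $G' := \tilde{G} \setminus E^*$, which is exactly the graph obtained by splitting all the $v_i$, and let $c$ be its number of connected components. Each component $K$ of $G'$ has edge boundary $\partial_{\tilde{G}}(K) \subseteq E^*$ and, for $c \geq 2$, this boundary is a non-trivial cut, so has size at least $3$ by the claim. Summing over components gives $3c \leq \sum_K |\partial_{\tilde{G}}(K)| \leq 2|E^*| = 2k$ (each edge of $E^*$ lies in at most two boundaries), hence $c \leq 2k/3$. Finally I would build a spanning tree of $\tilde{G}$ by taking a spanning tree inside each component of $G'$ and adding $c - 1$ edges of $E^*$ to stitch them together; the remaining at least $k - (c - 1) \geq k/3$ edges of $E^*$ can all be deleted while keeping $\tilde{G}$ connected, and the corresponding $v_i$'s form the desired set $S$ (the case $c = 1$ is trivial).

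The main obstacle I expect is the intermediate claim, specifically making sure that the reduction from a small $E^*$-cut of $\tilde{G}$ to a small vertex cut of $G$ is really valid: one must verify that split halves in $\tilde{G} \setminus C$ cannot form a spurious component of their own, for which both the independence of $\{v_1, \ldots, v_k\}$ (so no split halves are adjacent to one another in $\tilde{G}$) and the properness of each partition $(N_1^i, N_2^i)$ are essential.
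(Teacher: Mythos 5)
Your proposal is correct and follows essentially the same route as the paper: replace each $v_i$ by a link edge $v_i^1v_i^2$, use $3$-connectedness to show that any set of at most two link edges cannot disconnect the auxiliary graph (equivalently, each component of the fully split graph meets at least three crossing link edges), bound the number of components by $2k/3$ via handshaking, and keep only the $c-1$ link edges of a spanning tree. The only differences are presentational: you phrase the key step as a single cut lemma proved by contrapositive rather than a per-component incidence bound, and your spanning-tree count absorbs the paper's separate treatment of link edges internal to a component.
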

\begin{proof}
 Let $C_1, \ldots, C_r$ be the connected components of the graph $G_1$ that we get by splitting \emph{all} $v_i$. If $G_1$ is connected, then we can set $S=\{v_1, \ldots, v_k\}$ and we are done. So assume that $r>1$ in the following. Now add for every $i\in [k]$ the edge $(v^1_i,v_i^2)$. Call this edge set $L$ (for \emph{links}) and the resulting graph $G_2$. Note that $G_2$ is connected and for every edge set $E'\subseteq L$ we have that $G_2\setminus E'$ is connected if and only if $G$ is connected after splitting the vertices corresponding to the edges in $E'$. Denote by $L_{in}$ the edges in $L$ whose end points both lie in some component $C_j$ and let $L_{out}:= L\setminus L_{in}$.
 
 We claim that for every $C_j$, at least three edges in $L_{out}$ are incident to a vertex in $C_j$. Since $G_2$ is connected but the set $C_j$ is a connected component of $G_2 \setminus L = G_1$, there must be at least one edge in $L$ incident to a vertex in $C_j$. That vertex is by construction one of $v_1, \ldots, v_k$, say it is $v_i$. Since $N_1^i\ne \emptyset$ and $N_2^i\ne \emptyset$, we have that $v_i$ has a neighbor $w$ in $C_j$ and, $w \not\in \{v_1, \ldots, v_k\}$ since it is an independent set. Now let $L^j_{out}$ be the edges in $L_{out}$ that have an end point in $C_j$. Note that if we delete the vertices $S^j \subseteq \{v_1, \ldots, v_k\}$ for which the edges in $L^j_{out}$ were introduced in the construction of $G_2$, then a subset of $C_j$ becomes disconnected from the rest of the graph (which is non-empty because there is at least one component different from $C_j$ in $G_2$ which also contains a vertex not in $\{v_1, \ldots, v_k\}$ by the same reasoning as before). But then, because $G$ is $3$-connected, there must be at least three edges in $L^j_{out}$. Let $k':= |L_{out}|$, then by the handshaking lemma, 
\begin{align*}
  r \le \frac{2}{3} k'.
\end{align*}
Now contract all components $C_i$ in $G_2$ and call the resulting graph $G_3$. Note that~$G_3$ is connected and that $E(G_3)= L_{out}$. Moreover, whenever $G_3\setminus E^*$ is connected for some $E^*\subseteq L_{out}$, then $G$ is connected after splitting the corresponding vertices. Choose any spanning tree $T$ of $G_3$. Then $|E(T)|=r-1$ and deleting $E^* := L_{out} \setminus E(T)$ leaves $G_3$ connected. Thus the graph $G^*$ we get from~$G$ after splitting the vertices corresponding to $E^*$ is connected. We have 
 \begin{align*}
  |E^*| = |L_{out}| - |E(T)| = k'- (r-1) > \frac{k'}{3}.
 \end{align*}
Now observe that in $G$ we can safely split all $k-k'$ vertices $v_i$ that correspond to edges $v_i^1v_i^2$ such that $v_i^1$ and $v_i^2$ lie in the same component of $G_1$ without disconnecting the graph. Thus, overall we can split a set of size 
\begin{align*}
 k-k' + |E^*| > k-k'+ \frac{k'}{3} \ge \frac{k}{3}
\end{align*}
in $G$ such that the resulting graph remains connected. \qed
\end{proof}

\section{DNNF Lower Bounds for Tseitin-Formulas}

In this section, we use the results of the previous sections to show our lower bounds for DNNF computing Tseitin-formulas. To this end, we first show that we can restrict ourselves to the case of $3$-connected graphs.

\subsection{Reduction from Connected to 3-Connected Graphs}

In~\cite{BodlaenderK06}, Bodlaender and Koster study how separators can be used in the context of treewidth. They call a separator $S$ \emph{safe for treewidth} if there exists a connected component of $G \setminus S$ whose vertex set $V'$ is such that $\tw(G[S \cup V'] + clique(S)) = \tw(G)$, where $G[S \cup V'] + clique(S)$ is the graph induced on $S \cup V'$ with additional edges that pairwise connect all vertices in $S$.

\begin{lemma}\textup{~\cite[Corollary 15]{BodlaenderK06}}\label{lemma:safe_size_1_and_size_2_separators} Every separator of size 1 is safe for treewdith. When $G$ has no separator of size 1, every separator of size 2 is safe for treewidth.
\end{lemma}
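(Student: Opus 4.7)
The proof naturally splits into the two claims. For the first, let $S=\{v\}$ be a size-1 separator, so $\textup{clique}(S)$ contributes no new edges. Write the connected components of $G\setminus\{v\}$ as $C_1,\dots,C_r$ and set $H_i := G[\{v\}\cup C_i]$. Then $G$ is obtained by identifying the $H_i$ at the common vertex~$v$, and the plan is to invoke the standard vertex-glueing fact $\tw(G)=\max_i\tw(H_i)$. The $\geq$-direction is immediate because each $H_i$ is an induced subgraph of $G$, while the $\leq$-direction follows by taking optimal tree decompositions of the $H_i$, each rooted at a bag containing~$v$, and joining these roots into one tree. The component attaining the maximum then witnesses that $S$ is safe.

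For the second claim, assume $G$ has no size-1 separator and let $S=\{u,v\}$ be a size-2 separator; write $C_1,\dots,C_r$ for the components of $G\setminus S$ and $H_i := G[S\cup C_i] + uv$. I would prove $\tw(G)=\max_i\tw(H_i)$ via the auxiliary graph $G^+ := G + uv$. The first step is a clique-sum observation: each $H_i$ contains the edge $uv$, and $G^+$ is the clique-sum of the $H_i$ along the clique $\{u,v\}$. The standard clique-sum version of the glueing fact (analogous to the vertex-glueing case, but joining decompositions at bags containing both $u$ and $v$) then yields $\tw(G^+)=\max_i\tw(H_i)$, and in particular $\tw(H_i)\leq\tw(G^+)$ for every~$i$.

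The crux, and main obstacle, is showing $\tw(G)=\tw(G^+)$, i.e.\ that adding the edge $uv$ does not increase the treewidth when $G$ is 2-connected and $\{u,v\}$ is a separator. The plan is to start from an optimal tree decomposition $T$ of $G$ and transform it, without enlarging any bag, into one in which some bag contains both $u$ and $v$ (which is then a valid decomposition of $G^+$). If the subtrees $T_u$ and $T_v$ of bags containing $u$ and $v$ already intersect there is nothing to do; otherwise the bag-intersection across any edge of $T$ on the $T_u$--$T_v$ path forms a $u$-$v$ separator in $G$, and hence, by the absence of size-1 separators, contains at least two vertices. This slack is exploited to perform a local surgery that moves $u$ one step closer to $T_v$ along $T$ while preserving every tree-decomposition axiom; iterating places $u$ and $v$ in a common bag. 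Verifying that the surgery is always width-preserving is the delicate point, and one expects to feed in the two internally vertex-disjoint $u$-$v$ paths supplied by Menger's theorem (available thanks to 2-connectivity) to handle the bookkeeping of which vertices can be safely slid in or out of a given bag.
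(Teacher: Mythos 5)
The paper does not prove this lemma at all---it is imported verbatim as Corollary~15 of Bodlaender and Koster's \emph{Safe separators for treewidth}---so the only question is whether your argument stands on its own. The size-1 case does: the vertex-glueing identity $\tw(G)=\max_i\tw(H_i)$ is correct, both directions are justified (induced subgraph for $\geq$, joining optimal decompositions at bags containing $v$ for $\leq$), and the maximizing component witnesses safety.

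The size-2 case, however, has a genuine gap exactly where you locate the crux. Reducing everything to $\tw(G+uv)=\tw(G)$ via the clique-sum identity is fine, but the inequality $\tw(G+uv)\leq\tw(G)$ is the entire content of the lemma, and the ``local surgery'' you propose for it is only a plan: you never exhibit the move that slides $u$ one tree-edge closer to $T_v$, and the obvious candidate (insert $u$ into the next bag on the path) can overflow a bag that is already of maximum size---the observation that the adhesion sets along the path have size at least $2$ gives no mechanism for freeing up room, and it is not clear how the two Menger paths would be ``fed in.'' The standard (and Bodlaender--Koster's) way to close this is to bypass tree-decomposition surgery entirely: since $G$ has no size-1 separator, every component of $G\setminus\{u,v\}$ has both $u$ and $v$ in its neighbourhood, so for each $i$ one can pick a $u$--$v$ path internal to some other component $C_j$, contract it to realize the edge $uv$, and delete everything else; this exhibits $H_i=G[\{u,v\}\cup C_i]+uv$ as a minor of $G$, whence $\tw(H_i)\leq\tw(G)$ by minor-monotonicity. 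Combined with your clique-sum equality $\tw(G+uv)=\max_i\tw(H_i)$ (or directly, gluing the decompositions of the $H_i$ for the reverse inequality), this yields $\max_i\tw(H_i)=\tw(G)$ and hence safety. I recommend replacing the surgery step with this minor argument; as written, the proof of the second claim is incomplete.
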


Remember that a \emph{topological minor} $H$ of a $G$ is a graph that can be constructed from $G$ by iteratively applying the following operations:
\begin{enumerate}[leftmargin=*]
 \item[$-$] edge deletion,
 \item[$-$] deletion of isolated vertices, or
 \item[$-$] subdivision elimination: if $\deg(v) = 2$ delete $v$ and connect its two neighbors.
\end{enumerate}

\begin{lemma}\label{lemma:DNNF_size_for_TS_on_topological_minors} 
Let $H$ be a topological minor of $G$. If the satisfiable Tseitin-formula $T(G, 0)$ has a DNNF of size $s$, then so does $T(H, 0)$.
\end{lemma}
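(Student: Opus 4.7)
The plan is to induct on the length of a sequence of topological-minor operations taking $G$ to $H$, reducing the lemma to the claim that each single operation—edge deletion, isolated vertex deletion, or subdivision elimination—does not increase the DNNF size. Each intermediate graph $G_i$ in such a sequence still has a satisfiable all-zero Tseitin-formula (the all-zero assignment always satisfies $T(G_i,0)$), so the induction step has a uniform shape: given a DNNF $D$ of size $s$ for $T(G_i,0)$, construct a DNNF of size at most $s$ for $T(G_{i+1},0)$.

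I would dispatch the two easy cases first. For edge deletion ($G_{i+1}=G_i-e$), the discussion following Proposition~\ref{proposition:number_model_tseitin_formula} gives $T(G_i,0)\vert \overline{x_e} \equiv T(G_i-e,0)$, so it suffices to condition $D$ on $\overline{x_e}$ by replacing every $x_e$-leaf with $0$ and every $\overline{x_e}$-leaf with $1$; this preserves decomposability and does not increase the gate count. For the deletion of an isolated vertex $v$ (with $c(v)=0$), the constraint $\chi_v$ is on the empty set of variables with right-hand side $0$, hence identically true, so $T(G_i,0)$ and $T(G_i-v,0)$ are the same Boolean function on the same variable set and $D$ itself works.

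The main case is subdivision elimination, where I expect the argument to require the most care. Let $v$ have degree $2$ in $G_i$ with neighbors $u,w$, and set $G_{i+1}=(G_i-v)+uw$. First I would establish the semantic identity: the constraint $\chi_v$ of $T(G_i,0)$ reads $x_{uv}\oplus x_{vw}=0$, which forces $x_{vw}=x_{uv}$ on every model. Substituting $x_{vw}:=x_{uv}$ kills $\chi_v$, rewrites $\chi_w$ as $x_{uv}\oplus \sum_{e\ni w,\,e\ne vw} x_e = 0$, and leaves every other constraint untouched. Renaming $x_{uv}$ to $x_{uw}$ then produces exactly $T(G_{i+1},0)$; equivalently, $T(G_{i+1},0)$ is obtained from $T(G_i,0)$ by existentially projecting out $x_{vw}$ followed by the renaming $x_{uv}\mapsto x_{uw}$.

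To lift this to the DNNF side, I would invoke the standard observation that on a decomposable NNF the projection $\exists x_{vw}.f$ is computed by replacing each leaf labeled $x_{vw}$ or $\overline{x_{vw}}$ by the constant $1$: by decomposability, $x_{vw}$ occurs in at most one input of every $\land$-gate, so this substitution commutes through every gate. The result is a DNNF on $\var(T(G_i,0))\setminus\{x_{vw}\}$ of size at most $|D|$, and the final relabelling of $x_{uv}$-literals as $x_{uw}$-literals preserves both decomposability and the gate count. The delicate step is this semantic identity combined with the correctness of leaf-replacement for projection; the rest is routine.
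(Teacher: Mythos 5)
Your proof is correct and follows essentially the same route as the paper's: conditioning on $\overline{x_e}$ for edge deletion, observing that isolated-vertex deletion leaves the formula unchanged, and, for subdivision elimination, using the fact that $\chi_v$ forces $x_{uv}=x_{vw}$ so that one of the two variables can be forgotten (existentially projected), which on a DNNF is exactly the size-preserving leaf replacement you describe. The only difference is that you spell out the semantic identity and the correctness of forgetting on DNNF, details the paper delegates to its citation of Darwiche and Marquis.
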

\begin{proof}
Edge deletion corresponds to conditioning the variable by $0$ so it cannot increase the size of a DNNF. Deletion of an isolated vertex  does not change the Tseitin-formula. Finally, let $e_1, e_2$ be the edges incident to a vertex of degree $2$. Since we assume that all charges $c(v)$ are $0$, in every satisfying assignment, $x_{e_1}$ and $x_{e_2}$ take the same value. Thus we can simply forget the variable of $x_{e_2}$ which does not increase the size of a DNNF~\cite{DarwicheM02}.
\qed
\end{proof}

\begin{lemma}\label{lemma:tological_minor_3-connected}
Let $G$ be a graph with treewidth at least $3$. Then $G$ has a $3$-connected topological minor $H$ with $\tw(H) = \tw(G)$. 
\end{lemma}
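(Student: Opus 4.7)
The plan is an induction on $|V(G)|$: whenever $G$ is not already $3$-connected we construct a topological minor $G'$ of $G$ with $\tw(G')=\tw(G)$ and $|V(G')|<|V(G)|$, then apply the induction hypothesis to $G'$. Since $\tw(G)\geq 3$ forces $|V(G)|\geq 4$, a non-$3$-connected $G$ always admits a separator $S$ with $|S|\leq 2$, and Lemma~\ref{lemma:safe_size_1_and_size_2_separators} guarantees that such an $S$ is safe for treewidth.

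When $|S|=1$, write $S=\{s\}$; safeness provides a component $V_1$ of $G\setminus S$ with $\tw(G[\{s\}\cup V_1])=\tw(G)$ (adding a clique on a single vertex is vacuous), and $G':=G[\{s\}\cup V_1]$ is a topological minor of $G$ by plain vertex deletion. The delicate case is $|S|=2$. Writing $S=\{u,v\}$, the safe-separator lemma gives a component $V_1$ of $G\setminus S$ with $\tw(G[\{u,v\}\cup V_1]+uv)=\tw(G)$, and we must realize the \emph{virtual} edge $uv$ as an actual part of a topological minor. Because $G$ has no cut vertex, in every other component $V_j$ of $G\setminus S$ both $u$ and $v$ must have neighbors (otherwise $v$ alone, respectively $u$ alone, would separate $V_j$ from the rest of $G$), so there is a $u$-$v$ path through $V_j$; let $P$ be a shortest such path, which is then induced in $G[V_j\cup\{u,v\}]$. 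Set $H_0:=G[\{u,v\}\cup V_1\cup V(P)]$, a topological minor of $G$ by vertex deletion. Since $V_j$'s only neighbors outside $V_j$ are $u$ and $v$, and since $P$ is induced, each internal vertex of $P$ has degree exactly $2$ in $H_0$; eliminating these subdivisions one by one (each elimination preserving the degree-$2$ property of the next internal vertex of $P$) produces $G':=G[\{u,v\}\cup V_1]+uv$ as a topological minor of $G$, with $\tw(G')=\tw(G)$ by the choice of $V_1$.

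The inductive step is then straightforward: $|V(G')|<|V(G)|$ because in both cases at least one component of $G\setminus S$ is discarded (and in the second case only the path $P$ survives from $V_j$, then is contracted away), while $\tw(G')=\tw(G)\geq 3$ keeps the induction alive. The main obstacle is handling the virtual edge in the $|S|=2$ case: one cannot merely take an induced subgraph of $G$, since $uv$ need not be an edge of $G$. Choosing $P$ to be a shortest $u$-$v$ path through $V_j$ guarantees inducedness, which in turn makes the kept internal path vertices degree $2$ and hence eligible for subdivision elimination, thereby recovering exactly the edge $uv$ demanded by the safe-separator lemma.
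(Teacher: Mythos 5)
Your proof is correct and follows essentially the same route as the paper's: eliminate size-$1$ and size-$2$ separators using Lemma~\ref{lemma:safe_size_1_and_size_2_separators}, realizing the virtual edge $uv$ as a $u$--$v$ path through another component of $G\setminus S$ followed by subdivision eliminations. Phrasing it as an induction on $|V(G)|$ rather than an iterated construction, and the observation that a shortest path is induced and hence its internal vertices have degree $2$, are only presentational differences.
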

\begin{proof}
First we construct a topological minor of~$G$ with no separator of size $1$ that preserves treewidth. Let $S = \{v\}$ be a separator of size 1 of $G$, then $G \setminus S$ has a connected component $V'$ such that $G[S \cup V'] + clique(S) = G[S \cup V']$ has treewidth $\tw(G)$. Let $G' = G[S \cup V']$, then $\tw(G') = \tw(G)$. Observe that~$G'$ is a topological minor (remove all edges not in $G[S \cup V']$ thus isolating all vertices not in $S \cup V'$, which are then deleted) where $S$ is no longer a separator. Repeat the construction until~$G'$ has no separator of size 1.

Now assume $S = \{u,v\}$ is a separator of $G'$. If $V'$ are the vertices of a connected component of $G' \setminus S$, then there is a path from $u$ to $v$ in $G[S \cup V']$ since otherwise either $\{u\}$ or $\{v\}$ is a separator of size $1$ of~$G'$. Lemma~\ref{lemma:safe_size_1_and_size_2_separators} ensures that there is a connected component~$H'$ in~$G' \setminus S$ such that $H := (V(H') \cup S, E(H') \cup \{uv\})$ has treewidth $\tw(H) = \tw(G') = \tw(G)$. Let us prove that $H$ is topological minor of~$G'$. Consider a connected component of $G' \setminus S$ distinct from $H'$ with vertices $V'$ and let $P$ be a path connecting~$u$ to~$v$ in $G[S \cup V']$. Delete all edges of $G[S \cup V']$ not in~$P$, then delete all isolated vertices in $V'$ so that only~$P$ remains, finally use subdivision elimination to reduce $P$ to a single edge $uv$. Repeat the procedure for all connected components of $G' \setminus S$ distinct from $H'$, the resulting topological minor is $G[V(H') \cup S]$ with the (additional) edge $uv$, so $H$. 

Repeat the construction until there are no separators of size $1$ or size $2$ left. Note that this process eventually terminates since the number of vertices decreases after every separator elimination. The resulting graph $H$ is a topological minor of $G$ of treewidth $\tw(G)$ without separators of size $1$ or $2$. Since $\tw(H) = \tw(G) \ge 3$, we have that $H$ has at least $4$ vertices, so $H$ is $3$-connected.
\qed
\end{proof}

\subsection{Proof of the DNNF Lower Bound and of the Main Result}

\begin{lemma}\label{lem:dnnf_lower}
Let $T(G,c)$ be a satisfiable Tseitin-formula where $G$ is a connected graph with maximum degree at most $\Delta$. Any complete DNNF computing $T(G,c)$ has size at least~$2^{\Omega(\tw(G)/\Delta)}$.
\end{lemma}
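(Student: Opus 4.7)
The plan is to reduce to a $3$-connected graph, then play the adversarial rectangle game of Theorem~\ref{thm:DNNFlower} with a strategy for Adam that forces every rectangle to cover only a tiny fraction of the models of the Tseitin-formula. By Lemma~\ref{lemma:reduction_to_T(G,0)}, I may assume $c = 0$. If $\tw(G) < 3$ the bound is trivial, so I assume $\tw(G) \geq 3$. By Lemma~\ref{lemma:tological_minor_3-connected}, $G$ has a $3$-connected topological minor $H$ with $\tw(H) = \tw(G)$; none of the topological minor operations increases a vertex degree, so $H$ has maximum degree at most $\Delta$. By Lemma~\ref{lemma:DNNF_size_for_TS_on_topological_minors}, a DNNF for $T(G,0)$ of size $s$ yields one for $T(H,0)$ of size at most $s$, so it suffices to prove the bound for $T(H,0)$.

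I then apply Theorem~\ref{thm:DNNFlower} with $f = T(H,0)$ and $S = \sat(T(H,0))$. A v-tree of the variables $\{x_e : e \in E(H)\}$ is precisely a branch decomposition of $H$, so for any v-tree $T$ Charlotte picks, the definition of branchwidth together with Lemma~\ref{lemma:bw_vs_tw} yields a node of $T$ whose induced partition $(E_1, E_2)$ of $E(H)$ \emph{splits} at least $\bw(H) \geq \tfrac{2}{3}\tw(G)$ vertices, where $v$ is split when both $E(v) \cap E_1$ and $E(v) \cap E_2$ are non-empty. Adam's strategy is to pick such a partition.

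Let $V^*$ denote the set of split vertices and fix an arbitrary rectangle $R \subseteq \sat(T(H,0))$ for $(E_1, E_2)$. A standard greedy argument, using that $H$ has maximum degree at most $\Delta$, extracts an independent set $V^{**} \subseteq V^*$ of size at least $|V^*|/(\Delta+1) = \Omega(\tw(G)/\Delta)$. Applying Lemma~\ref{lemma:choose_vertices_from_3-connected_graph} to $V^{**}$ with the partitions $(\{u \in N(v) : uv \in E_1\}, \{u \in N(v) : uv \in E_2\})$, I obtain $U \subseteq V^{**}$ of size at least $|V^{**}|/3$ such that splitting $H$ at every $v \in U$ along these partitions yields a connected graph. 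By Lemma~\ref{lemma:rectangle_sub_constraints}, $R$ must satisfy a sub-constraint $\chi'_v$ of $\chi_v$ for each $v \in U$, so Lemma~\ref{lemma:graph_splitting_a_lot_to_connected_equals_far_less_models} gives
$|R| \leq |\sat(T(H,0) \wedge \bigwedge_{v \in U} \chi'_v)| \leq 2^{|E(H)| - |V(H)| - |U| + 1}$.

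Since $T(H,0)$ has $2^{|E(H)| - |V(H)| + 1}$ models by Proposition~\ref{proposition:number_model_tseitin_formula}, each round of the game adds a rectangle containing at most a $2^{-|U|}$ fraction of $\sat(T(H,0))$, so Charlotte needs at least $2^{|U|} = 2^{\Omega(\tw(G)/\Delta)}$ rounds, and Theorem~\ref{thm:DNNFlower} delivers the claimed lower bound. The main subtlety will be bookkeeping the three multiplicative losses in the exponent — the $1/(\Delta+1)$ from extracting an independent set, the $1/3$ from Lemma~\ref{lemma:choose_vertices_from_3-connected_graph} for preserving connectivity after splitting, and the $2/3$ from Lemma~\ref{lemma:bw_vs_tw} converting branchwidth to treewidth — and checking that the splitting partitions derived from $(E_1, E_2)$ are proper at every $v \in V^*$, which is exactly what splitness guarantees.
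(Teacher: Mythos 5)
Your proposal is correct and follows essentially the same route as the paper's proof: reduce to a $3$-connected topological minor via Lemmas~\ref{lemma:DNNF_size_for_TS_on_topological_minors} and~\ref{lemma:tological_minor_3-connected}, have Adam pick a cut of order at least $\bw \geq \frac{2}{3}\tw(G)$ in the branch decomposition given by Charlotte's v-tree, pass to an independent subset, apply Lemma~\ref{lemma:choose_vertices_from_3-connected_graph} to keep the split graph connected, and bound each rectangle via Lemmas~\ref{lemma:rectangle_sub_constraints} and~\ref{lemma:graph_splitting_a_lot_to_connected_equals_far_less_models}. Your bookkeeping is in fact slightly more careful than the paper's (the $1/(\Delta+1)$ greedy bound, the explicit check that degrees do not increase under topological minors, and the trivial case $\tw(G)<3$), but the argument is the same.
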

\begin{proof}
By Lemma~\ref{lemma:reduction_to_T(G,0)} we can set $c = 0$. By Lemmas~\ref{lemma:DNNF_size_for_TS_on_topological_minors} and ~\ref{lemma:tological_minor_3-connected} we can assume that $G$ is 3-connected. We show that the adversarial multi-partition rectangle complexity is lower-bounded by $2^{k}$ for $k:= \frac{2\tw(G)}{9\Delta}$. To this end, we will show that the rectangles that Charlotte can construct after Adam's answer are never bigger than $2^{|E(G)|-|V(G)|-k+1}$. Since $T(G,c)$ has $2^{|E(G)|-|V(G)|+1}$ models, the claim then follows.

So let Charlotte choose an assignment $a$ and a v-tree $T$. Note that since the variables of $T(G,0)$ are the edges of $G$, the v-tree $T$ is also a branch decomposition of $G$. Now by the definition of branchwidth, Adam can choose a cut of $T$ inducing a partition $(E_1, E_2)$ of $E(G)$ for which there exists a set $V' \in V(G)$ of at least $\bw(G) \geq \frac{2}{3}\tw(G)$ vertices incident to edges in $E_1$ and to edges in $E_2$. 


$G$ has maximum degree $\Delta$ so there is an independent set $V'' \subset V'$ of size at least $\frac{|V'|}{\Delta}$. Since $G$ is $3$-connected, by Lemma~\ref{lemma:choose_vertices_from_3-connected_graph} there is a subset $V^* \subseteq V''$ of size at least $\frac{|V''|}{3} \geq \frac{2\tw(G)}{9\Delta} = k$ such that~$G$ remains connected after splitting of the nodes in $V^*$ along the partition of their neighbors induced by the edges partition $(E_1,E_2)$. Using Lemma~\ref{lemma:rectangle_sub_constraints}, we find that any rectangle $R$ for the partition $(E_1,E_2)$ respects a sub-constraint $\chi'_v$ for each $v \in V^*$. So $R$ respects $T(G,0) \land \bigwedge_{v \in V^*} \chi'_v$. Finally, Lemma~\ref{lemma:graph_splitting_a_lot_to_connected_equals_far_less_models} shows that $|R| \leq 2^{|E(G)| - |V(G)| - k + 1}$, as required.
\qed
\end{proof}

Theorem~\ref{theorem:main_result} is now a direct consequence of Theorem~\ref{theorem:from_refutation_to_sat_DNNF}, Lemma~\ref{lem:dnnf_lower} and Lemma~\ref{lemma:reduction_to_T(G,0)}
\section{Conclusion}

We have shown that the unsatisfiable Tseitin-formulas with polynomial length of regular resolution refutations are completely determined by the treewidth of the underlying graphs. We did this by giving a connection between lower bounds for regular resolution refutations and size bounds of DNNF representations of Tseitin-formulas. Moreover, we introduced a new two-player game that allowed us to show DNNF lower bounds.

Let us discuss some questions that we think are worth exploring in the future. First, it would be interesting to see if a $2^{\Omega(\tw(G))}$ lower bound for the refutation of Tseitin-formulas can also be shown for general resolution. In that case the length of resolution refutations would essentially be the same as that regular resolution refutations for Tseitin formulas. Note that this is somewhat plausible since other measures like space and width are known to be the same for the two proof systems for these formulas~\cite{GalesiTT20}.

Another question is the relation between knowledge compilation and proof complexity. As far as we are aware, our Theorem~\ref{theorem:from_refutation_to_sat_DNNF} is the first result that connects bounds on DNNF to such in proof complexity. It would be interesting to see if this connection can be strenghtened to other classes of instances, other proof systems, representations from knowledge compilation and measures on proofs and representations, respectively. 

\newpage

\bibliography{biblio}

\end{document}